\documentclass[lettersize,journal]{IEEEtran}
\usepackage{color}
\usepackage{amsmath,amsthm,amssymb,amsfonts}

% 允许公式跨栏，数字范围【1-4】代表跨栏程度，越大表示程度越高，公式用align
\allowdisplaybreaks[4]

\usepackage[T1]{fontenc}
\usepackage{graphicx}
\usepackage{longtable}
\usepackage{float}
%\usepackage{mathabx}
% 设置超链接
\usepackage{hyperref}

% 控制页面间距，可以注释
%\usepackage[top=2cm,bottom=2cm,left=2.5cm,right=2.5cm,includehead,includefoot]{geometry}

% 自定义页眉页脚
%\usepackage{fancyhdr}

% 等式上下加字母，此包与footnote冲突
%\usepackage{extpfeil}

% 数学字体加粗
\usepackage{bm}

% 使用不同的符号字体宏包，可省略
%\usepackage{latexsym}

% 是否加入设置间距宏，无需求可以不加入
%\usepackage{setspace}

% 设置脚注的位置，一般不需要改动
\usepackage[bottom]{footmisc}

% 控制花体字母的区别，IEEE文档里一般不要改动
\usepackage[mathscr]{eucal}

% 定制标题，center指标题居中
%\usepackage[center]{caption2}

% 对表格列格式的扩展
%\usepackage{array}

% 注意tocloft与此包冲突
\usepackage{subfigure}

% 表格内换行
\usepackage{multirow}

\usepackage{cite}

% 加入EPS格式的图片
\usepackage{epsfig}

% 浮动图片
%\usepackage{wrapfig}

% 对文字高亮，加划线等操作
\usepackage{soul}

% 加algorithm
% 方法1
%\usepackage{algorithm}
%\usepackage{algorithmicx}
%\usepackage{algpseudocode}

% Algorithm中显示Input以及Output
%\renewcommand{\algorithmicrequire}{\textbf{Input:}} 
%
%\renewcommand{\algorithmicensure}{\textbf{Output:}} 

%% 方法2
% 步骤带序号
%\usepackage[linesnumbered,boxed,ruled,commentsnumbered]{algorithm2e}

% 步骤不带序号
\usepackage[lined,boxed,ruled]{algorithm2e}

% small 环境中公式标号不缩小，注意此命令与下一个缩页数命令冲突，会导致公式靠左
%\makeatletter
%\renewcommand{\maketag@@@}[1]{\hbox{\m@th\normalsize\normalfont#1}}
%\makeatother

% 简写命令集合
\newcommand{\barjmath}{\bar{\jmath}}
\newtheorem{theorem}{Theorem}
\newtheorem{lemma}{Lemma}

\newtheorem{corol}{Corollary}

%\newtheorem{corollary}[theorem]{Corollary}

%信息几何专用
\newcommand{\funf}[1][\ba,\bA]{\mathbf{f}\left( #1\right)}

%order

% Condition

% Expected value
\newcommand{\Exp}{{\mathbb{E}}}

%\newcommand{\expcnd}[2]{{\Exp}\left\{#1\left\vert#2\right.\right\}}%conditional expectation

% Variance value

% Braces
\newcommand{\braces}[1]{\left\lbrace #1\right\rbrace}

% KL divergence

% Positive integer set
\newcommand{\setposi}[1]{\mathcal{Z}_{#1}^+}

% Non-negative integer set 
\newcommand{\setnnega}[1]{\mathcal{Z}_{#1}}

% Trace operator
\newcommand{\tr}[1]{\mathrm{tr}\left\lbrace #1\right\rbrace }

% Diag operator
\newcommand{\diag}[1]{\mathrm{diag}\left\lbrace #1\right\rbrace }
\newcommand{\Diag}[1]{\mathrm{Diag}\left\lbrace #1\right\rbrace }

% rank of a matrix
\newcommand{\rank}[1]{\mathrm{rank} \left(  #1 \right)  }

% PDF of Gaussian and complex Gaussian
\newcommand{\pg}[3]{ p_{\mathrm{G}} \left(  #1;#2,#3  \right)  }

% logdet operator

% absolute value operator

% cardinality operator

% log2 operator

% l2 norm operator

% Frobenius norm operator

% Inner product operator

%fractional

%ldots
 %1,2,...,K
 %2,3,...,K
%partial and integration

% Inverse
%inverse
%inverse
%pseudo inverse
% Inverse

% Gaussian distribution

% arg operation

%\renewcommand{\st}{\mathrm{subject}\quad\mathrm{to}}
% tricky ith thingy

% to infinity

\newcommand{\toinf}[1]{#1 \to\infty}

\newcommand{\liminfty}[1]{\lim_{\toinf{#1}}}
% delta function

% basic vector
%i-th column of the identity matrix
%element of the vector/matrix

% vectorization of a matrix
\newcommand{\mtxvec}[1]{\mathrm{vec}\left\lbrace #1\right\rbrace }

% overarc

%set difference

%element to set

%cosine theta

%db

%equation with tag

\newcommand{\equaa}{\mathop{=}^{(\textrm{a})}}

\newcommand{\expb}[1]{\exp \left\lbrace  #1 \right\rbrace}

%mathcal font

%mathbf font
\newcommand{\ba}{\mathbf{a}}
\newcommand{\bb}{\mathbf{b}}
\newcommand{\bc}{\mathbf{c}}
\newcommand{\bd}{\mathbf{d}}
\newcommand{\be}{\mathbf{e}}

\newcommand{\bg}{\mathbf{g}}
\newcommand{\bh}{\mathbf{h}}

\newcommand{\bp}{\mathbf{p}}

\newcommand{\br}{\mathbf{r}}

\newcommand{\bt}{\mathbf{t}}

\newcommand{\bx}{\mathbf{x}}
\newcommand{\by}{\mathbf{y}}
\newcommand{\bz}{\mathbf{z}}

\newcommand{\bA}{\mathbf{A}}
\newcommand{\bB}{\mathbf{B}}

\newcommand{\bD}{\mathbf{D}}
\newcommand{\bE}{\mathbf{E}}
\newcommand{\bF}{\mathbf{F}}

\newcommand{\bH}{\mathbf{H}}
\newcommand{\bI}{\mathbf{I}}

\newcommand{\bK}{\mathbf{K}}

\newcommand{\bM}{\mathbf{M}}

\newcommand{\bP}{\mathbf{P}}
\newcommand{\bQ}{\mathbf{Q}}

\newcommand{\bT}{\mathbf{T}}
\newcommand{\bU}{\mathbf{U}}
\newcommand{\bV}{\mathbf{V}}

\newcommand{\bX}{\mathbf{X}}
\newcommand{\bY}{\mathbf{Y}}
\newcommand{\bZ}{\mathbf{Z}}

%dsfont

\newcommand{\bbC}{\mathbb{C}}
\newcommand{\bbR}{\mathbb{R}}

%bmfont

%constant

%\mathrm font

%inverse

%hat

%tilde

%bar

%symbol
\newcommand{\bzero}{\mathbf{0}}
\newcommand{\bone}{\mathbf{1}}

\newcommand{\bSigma}{{\boldsymbol\Sigma}}

\newcommand{\bLambda}{{\boldsymbol\Lambda}}

\newcommand{\bOmega}{{\boldsymbol\Omega}}
\newcommand{\bomega}{{\boldsymbol\omega}}

\newcommand{\btheta}{{\boldsymbol\theta}}

\newcommand{\bgamma}{{\boldsymbol\gamma}}
\newcommand{\bmu}{{\boldsymbol\mu}}

\newcommand{\bxi}{{\boldsymbol\xi}}
\newcommand{\bvartheta}{{\boldsymbol\vartheta}}

\newcommand{\bbnu}{{\boldsymbol\nu}}

% et. al.

%MMSE  SNR SINR

%special

\newcommand{\dnnot}[2]{#1_{\textrm{#2}}}

\newcommand{\normmm}[1]{{\left\vert\kern-0.25ex\left\vert\kern-0.25ex\left\vert #1 
		\right\vert\kern-0.25ex\right\vert\kern-0.25ex\right\vert}}

\newcommand{\norm}[1]{\lVert #1 \rVert}

% correct bad hyphenation here
% \hyphenation{op-tical net-works semi-conduc-tor}

% \renewcommand{\baselinestretch}{1.9} %{2.0}
% \makeatother

%\linespread{1.65}

%护眼专用
\definecolor{myback}{RGB}{204,232,207}
%\pagecolor{myback}

% IG专用字符

% 缩进减页数专用
%\makeatletter
%\g@addto@macro\normalsize{%
%	\setlength\abovedisplayskip{2pt}
%	\setlength\belowdisplayskip{2pt}
%	\setlength\abovedisplayshortskip{0.5pt}
%	\setlength\belowdisplayshortskip{0.5pt}
%	\setlength\textfloatsep{5pt}
%	\setlength{\abovecaptionskip}{-4pt}
%	\setlength{\belowcaptionskip}{-5pt}
%}
%\makeatother
%
%\newcommand{\subparagraph}{}
%\usepackage{titlesec}
%\titlespacing{\section}{0pt}{1.2ex plus .3ex minus .1ex}{.2ex plus .1ex}
%\titlespacing{\subsection}{0pt}{1.2ex plus .3ex minus .1ex}{.2ex plus .1ex}

% 用上述代码缩小段间距以及公式间距后，使用eqref引用公式后，公式号前面会出现空格，使用重定义的引用可以解决此问题
%\renewcommand{\eqref}[1]{(\ref{#1})}

%\newcommand{\eqref}[1]{(\ref{#1})}

% SIG专用字符（不需要）

\begin{document}

% 标题、作者、致谢等内容输入
\title{Simplified Information Geometry Approach for Massive MIMO-OFDM Channel Estimation - Part II: Convergence Analysis}	
\author{
	Jiyuan~Yang,~\IEEEmembership{Student~Member,~IEEE,}
	Yan~Chen,~
	Mingrui~Fan,
%	~An-An~Lu,~\IEEEmembership{Member,~IEEE,}
    ~Xiqi~Gao,~\IEEEmembership{Fellow,~IEEE,}
    Xiang-Gen~Xia,~\IEEEmembership{Fellow,~IEEE,}
	and 
	Dirk~Slock,~\IEEEmembership{Fellow,~IEEE}
}

\maketitle

\begin{abstract}
In Part II of this two-part paper, we prove the convergence of the simplified information geometry approach (SIGA) proposed in Part I.
For a general Bayesian inference
problem, we first show that the iteration of the common second-order natural parameter (SONP) is separated from that of the common first-order natural parameter (FONP).
Hence, the convergence of the common SONP can be checked independently.
We show that with the initialization satisfying a specific but large range, the common SONP is convergent regardless of the value of the damping factor. 
For the common FONP, we establish a sufficient condition of its convergence and prove that the convergence of the common FONP relies on the spectral radius of a particular matrix related to the damping factor.
We give the range of the damping factor that guarantees the convergence in the worst case. 
Further, we determine the range of the damping factor for massive MIMO-OFDM channel estimation by using the specific properties of the measurement matrices.
Simulation results are provided to confirm the theoretical results.
\end{abstract}
\begin{IEEEkeywords}
	Convergence, Bayesian inference, information geometry, damping factor.
\end{IEEEkeywords}

\section{Introduction} 
Numerous problems in signal processing eventually come to the issue of computing marginal probability density functions (PDF) from a high dimensional joint PDF.
In general, the calculation of direct marginalization could be unaffordable since operations such as matrix inversion could be involved.
In the past decades, many works have been devoted to providing an efficient way to compute the (approximate) marginal PDFs under various cases.
Among them, Bayesian inference approaches, e.g., message passing, Bethe free energy and etc, have attracted much interest due to their reliable performance and low computational complexities \cite{BP,jordan1999introduction,winn2005variational,1459044,malioutov2006walk,AMP,GAMP,EP,Bethe}.
Furthermore, besides the two advantages mentioned above, 
some of them possess favorable results in theory. 
\cite{malioutov2006walk} proposes the Gaussian belief propagation (BP) and shows that Gaussian BP is able to compute true marginal mean.
In \cite{AMP}, the powerful approximate message passing (AMP) algorithm is proposed.
It has been shown that AMP with Bayes-optimal denoiser can be treated as an exact approximation of loopy BP in the large system limit.
%Moreover, for zero-mean independent identically distributed (i.i.d.) sub-Gaussian measurement (sensing) matrix satisfying large system limit, the fixed point of AMP can achieve the Bayes-optimal
%performance if the fixed point is unique.
When the underlying factor graph is a tree, the expectation propagation (EP) in \cite{EP} is convergent and can exactly
achieve the Bayes-optimal performance.
%yield the minimum mean square error (MMSE) estimate.

Recently, we have introduced the information geometry approach (IGA) to the massive multiple-input multiple-output (MIMO) channel estimation \cite{IGA}.
We also improve the stability of IGA by introducing the damping factor and show that IGA can obtain accurate a posteriori mean at its fixed point.
On the basis of IGA, two  new results of IGA are revealed when the constant magnitude pilots are adopted.
Based on these new results, we propose a simplified IGA (SIGA) in Part I of this two-part paper \cite{SIGA}.
Although proposed for the massive MIMO-OFDM channel estimation, SIGA itself could serve as a generic Bayesian inference method which is suitable for Gaussian priors and constant magnitude measurement matrix.
In Part I, it has been shown that at the fixed point, the a posteriori mean obtained by SIGA is asymptotically optimal.
Furthermore, SIGA can be implemented efficiently when the measurement matrix has special structure.
For example, in massive MIMO channel estimation, the measurement matrix could be constructed by partial DFT matrices, SIGA can be then implemented by fast Fourier transform (FFT), which significantly reduces its computational complexity.

On the other hand, one standard limitation of Bayesian inference approaches is that they only work under the prerequisite that their parameters do converge.
However, due to the variety of problems, a unified way of proving convergence of Bayesian inference approaches has not yet been found.
Thus, the convergence needs to be proved for individual approaches as well as individual problems.
So far, there are only a few methods whose convergence has been relatively well revealed.
One sufficient condition named the walk-summability is proposed in \cite{malioutov2006walk} for the convergence of Gaussian BP.
Then, several extended conditions are proposed by the authors in  \cite{6872556,7004066}.
In \cite{9844780}, the convergence of orthogonal/vector AMP is proved based on the idea of "convergence in principle".
In \cite{8698290}, the convergence of AMP (or, equivalently, generalized AMP)  is proved in the special case with Gaussian priors.
%In \cite{10161629}, the authors propose a convergent generalized AMP (GAMP) based on the application of alternating minimization to an augmented Lagrangian of a large
%system limit of the Bethe free energy.

Similar to most other Bayesian inference approaches, SIGA suffers from divergence.
However, by adding damping factor, its convergence can be significantly improved. 
This is an interesting observation, since in many iterative Bayesian inference approaches, such as, e.g., AMP, damped updating likewise plays an important role in convergence. 
In Part II of this two-part paper, we will give a theoretical analysis of the convergence of SIGA. 
The role of the damping factor in the iteration will also be clarified.
We summarize the main contributions as follows:
\begin{itemize}
	\item {We prove the convergence of the common  second-order natural parameter (SONP) in SIGA for a general Bayesian inference
	problem with the measurement matrix of
	constant magnitude property.}
	We show that the iteration of the common SONP is separated from that of the common first-order natural parameter (FONP).
	It is then proved that given the initialization satisfying a specific range, the common SONP is guaranteed to converge regardless of the value of damping factor, where the range of initialization is very large.
	\item We establish a sufficient condition of the convergence of the common FONP in SIGA.
	We show that the convergence of the common FONP depends on the spectral radius of the iterating system matrix.
	On this basis, we further give the range of the damping factor that guarantees the convergence of the common FONP in the worst case.
	\item {We apply the above general theories to the case of massive MIMO-OFDM channel estimation.} 
	Specifically, we determine a range of the damping factor that guarantees the convergence of the common FONP in SIGA from the properties of the measurement matrices.
\end{itemize}

The rest of Part II of this two-part paper is organized as follows. 
Preliminaries of SIGA are introduced, and
its convergence is analyzed in Section \uppercase\expandafter{\romannumeral2}.
A range of damping factors that guarantee the convergence of SIGA in massive MIMO-OFDM channel estimation is presented in Section \uppercase\expandafter{\romannumeral3}. 
Simulation results are provided in Section \uppercase\expandafter{\romannumeral4}. 
The conclusion is drawn in Section \uppercase\expandafter{\romannumeral5}.

Notations: 
The superscripts $\left( \cdot \right)^*$, $\left( \cdot \right)^T$ and $\left( \cdot \right)^H$ denote the conjugate, transpose and conjugate-transpose operator, respectively.
We use $\lceil x \rceil$ to denote the largest integer not larger than $x$.
$\bzero$ denotes zero vector or zero matrix with proper dimension when it fits.
$\Diag{\bx}$ denotes the diagonal matrix with $\bx$ along its main diagonal and $\diag{\bX}$ denotes a vector consisting of the diagonal elements of $\bX$.  
Define $\setposi{N} \triangleq \braces{1,2,\ldots,N}$ and $\setnnega{N} \triangleq \braces{0,1,\ldots,N}$. 
$y_n$, $a_{i,j}$ or $\left[\bA\right]_{i,j}$, and $\left[ \bA \right]_{:,i}$ denote the $n$-th component of the vector $\by$, the $\left(i,j\right)$-th component of the matrix $\bA$, and the $i$-th row of the matrix $\bA$, where the component indices start with $1$. 
$\ba < b$ means that each component in vector $\ba$ is smaller than the scalar $b$.
$\ba < \bc$ means that each
component in vector $\ba$ is smaller than the component in the corresponding position in vector $\bc$.
$\pg{\bh}{\bmu}{\bSigma}$ denotes the PDF of a complex Gaussian distribution $\mathcal{CN}\left( \bmu,\bSigma \right)$ for vector $\bh$ of complex random variables.

\section{{Convergence of SIGA}}
In this section, we first briefly introduce SIGA proposed in \cite{SIGA}, more details can be found therein.
Then, through re-expressing its iterations, we prove its convergence.

\subsection{SIGA}\label{sec:SIGA}
In Part II, we focus on the following Bayesian inference problem:
\begin{equation}\label{equ:rece signal model}
	\by = \bA\bh + \bz,
\end{equation}
where $\by \in \bbC^{N }$ is the observation, $\bA \in \bbC^{N \times M}$ is the deterministic (also known) measurement matrix,
$\bA$ satisfies constant magnitude property, i.e., $\left|a_{i,j}\right| = \left|a_{m,n}\right|, \forall i,j,m,n$, 
$\bh \sim \mathcal{CN}\left( \bzero,\bD \right)$ is the $M$-dimensional complex Gaussian random vector to be estimated, 
its covariance matrix $\bD$ is deterministic, known, positive definite and diagonal,
$\bz \sim \mathcal{CN}\left(\bzero,\sigma_z^2\bI\right)$ is the $N$-dimensional noise vector, and $\bh$ and $\bz$ are independent with each other.
Without loss of generality, assuming that
the components of $\bA$ have unit magnitude.
Our goal is to obtain the \textsl{a posteriori} information\footnote{Note that a virtual received signal model (\cite[Equation (28)]{SIGA}) is introduced such that the \textsl{a posteriori} mean obtained by SIGA can be asymptotically optimal.
The proof of convergence of SIGA is the same regardless of which received signal model is used. 
%The difference is that the \textsl{a posteriori} mean value obtained by SIGA is asymptotically optimal after using the virtual receive signal model. 
We consider only the received signal model \eqref{equ:rece signal model} for the sake of notational simplicity.} of $\bh$.
Given $\by$, the \textsl{a posteriori} distribution of $\bh$ is Gaussian. 
Thus, the \textsl{a posteriori} information of $\bh$ is determined by the \textsl{a posteriori} mean and the \textsl{a posteriori} covariance matrix.
We have $p\left( \bh|\by \right) = \pg{\bh}{\tilde{\bmu}}{\tilde{\bSigma}}$, where the \textsl{a posteriori} mean $\tilde{\bmu}$ and covariance matrix $\tilde{\bSigma}$ are given by   
%\cite{IGA,SIGA}
\cite{kay1993fundamentals}
\begin{subequations}\label{equ:post information}
	\begin{equation}\label{equ:post mean}
		\tilde{\bmu} = \bD\left( \bA^H\bA\bD + \sigma_z^2\bI  \right)^{-1}\bA^H\by,
	\end{equation}
	\begin{equation}\label{equ:post covariance}
		\tilde{\bSigma} = \left( \bD^{-1} + \frac{1}{\sigma_z^2}\bA^H\bA \right)^{-1}.
	\end{equation}
\end{subequations} 
%it is unaffordable to apply \eqref{equ:post information} to obtain the \textsl{a posteriori} information of $\bh$ since the computational complexity of \eqref{equ:post information} is $\mathcal{O}\left(M^3 + M^2N  \right)$.
In the case of large $M$ and $N$, it is unaffordable to directly calculate $\tilde{\bmu}$ and $\tilde{\bSigma}$ since a large dimensional matrix inversion is involved and the complexity of it is $\mathcal{O}\left(M^3 + M^2N \right)$.
The aim of SIGA is calculating the approximations of the marginals of the \textsl{a posteriori} distribution, i.e., the approximations of  $p_i\left(h_i|\by\right), i\in \setposi{M}$.
Then, the \textsl{a posteriori} mean and variance can be obtained.
We begin with some essential definitions in SIGA. 
Given $\ba, \bb \in \bbC^{M}$, define a vector function as $\funf[\ba,\bb] \triangleq \left[ \ba^T, \bb^T \right]^T\in \bbC^{2M}$ and  an operator $\circ$ as $\ba \circ \bb \triangleq \frac{1}{2}\left(\bb^H\ba + \ba^H\bb\right)$.
Let $\bd =\funf[\bzero,\diag{-\bD^{-1}}] \in \bbC^{2M}$ and $\bt = \funf[\bh,\bh\odot\bh^*] \in \bbC^{2M}$.
Then, $p\left(\bh|\by\right)$ can be expressed as \cite{IGA,SIGA}
\begin{subequations}
	\begin{equation}\label{equ:a posteriori pdf 2}
		p\left(\bh|\by\right) = \expb{\bd \circ \bt + \sum_{n=1}^{N}c_n\left(\bh\right) - \psi_q},
	\end{equation}
	\begin{equation}\label{equ:c_n}
		c_n\left(\bh\right) \!=\! \frac{1}{\sigma_z^2}\left(  -\bh^H{\bgamma_n\bgamma_n^H}\bh + {y_n}\bh^H\bgamma_n + {y_n^*}\bgamma_n^H\bh\right),
	\end{equation}
		\begin{equation}\label{equ:gamma_n}
		\bgamma_n = \left[ \bA^H \right]_{:,n} = \left[ a_{n,1}^*, \ \cdots, \  a_{n,M}^* \right]^T \in \bbC^{M},
	\end{equation}
\end{subequations}
where $\psi_q$ is the normalization factor. 
In \eqref{equ:a posteriori pdf 2},  $\bt$ only contains the  statistics of single random variables, i.e., $h_i$ and $\left|h_i\right|^2, i\in \mathcal{Z}_M^+$, and all the interactions (cross terms), $h_ih_j^*, i\neq j$, are contained in $c_n\left(\bh\right), n \in \mathcal{Z}_N^+$. 
SIGA is to approximate each $c_n\left(\bh\right)$ as  $\bxi_{n}\circ\bt$ in an iterative manner, where $\bxi_n \in \bbC^{M}$ is referred as to the approximation item.
In this way, we have
\begin{equation}
	p\left(\bh|\by\right) \approx p_0\left(\bh;\bvartheta_0\right) = \expb{\left(\bd + \bvartheta_0\right) \circ \bt - \psi_0 },
\end{equation}
where $\bvartheta_{0} = \sum_{n=1}^{N}\bxi_{n} \in \bbC^{2M}$ and
$\psi_0$ is the normalization factor.
The marginals of $p_0\left(\bh;\bvartheta_0\right)$ can be calculated directly since it contains no cross terms between random variables. 
To obtain $\bxi_{n}, n\in \setposi{N}$, and
$\bvartheta_{0}$, SIGA constructs the following three types of manifolds: the original manifold (OM), the objective manifold (OBM) and the auxiliary manifold (AM).
The OM is defined as the set of PDFs of $M$-dimensional complex Gaussian random vectors,
\begin{equation}
	\mathcal{M}_{\textrm{or}} = \left\lbrace p\left(\bh\right) \!=\! p_G\left(\bh;\bmu,\bSigma\right), \bmu\in\bbC^{M}, \bSigma\in \mathbb{H}_+^M \right\rbrace, 
\end{equation}
where  $\mathbb{H}_+^M$ is the set of $M$ dimensional positive definite matrices.  The OBM is defined as
\begin{equation}\label{equ:definition of M0}
	\mathcal{M}_0 = \left\lbrace p_0\left(\bh;\bvartheta_0\right) = \expb{\left(\bd+\bvartheta_0\right)\circ\bt - \psi_0\left(\bvartheta_0\right)}  \right\rbrace, 
\end{equation} 
where $\bvartheta_0 = \funf[\btheta_0,\bbnu_0]$,  
$\btheta_0 \in \bbC^{M}$ and $\bbnu_0 \in \bbR^{M}$ are referred to as the natural parameter (NP),
the first-order natural parameter (FONP) and the second-order natural parameter (SONP) of $p_0$, respectively, {and $\psi_0\left(\bvartheta_0\right)$ is the free energy (normalization factor).}
$N$ AMs are defined, where the $n$-th of them is given by
\begin{subequations}\label{equ:definition of Mn}
	\begin{equation}
		\mathcal{M}_n = \braces{p_n\left(\bh;\bvartheta\right)}, n\in\mathcal{Z}_N^+,
	\end{equation}
	\begin{equation}
		p_n\left(\bh;\bvartheta\right) \!=\! \expb{\left(\bd + \bvartheta\right)\circ \bt + c_n\left(\bh\right) - \psi_n\left(\bvartheta\right)},
	\end{equation}
\end{subequations}
where $\bvartheta = \funf[\btheta,\bbnu]$, $\btheta \in \bbC^{M}$ and $\bbnu \in \bbR^{M}$ are referred to as the common NP, the common FONP and the common SONP of $p_n$, respectively,
and $\psi_n\left(\bvartheta\right)$ is the free energy.
We call $\bvartheta$ the common NP since that all the $N$ AMs share the same $\bvartheta$.
The OBM and the AMs are submanifolds of the OM since the distributions are all $M$-dimensional complex Gaussian distributions, but with different constraints \cite{SIGA}.
%We have $p_0\left(\bh;\bvartheta_{0}\right) = \pg{\bh}{\bmu_0}{\bSigma_{0}}$ and $p_n\left(\bh;\bvartheta\right) = p_\textrm{G}\left(\bh;\bmu_n,\bSigma_{n}\right), n\in\setposi{N}$, where
%\begin{subequations}\label{equ:mu_0 and Sigma_0}
%	\begin{equation}\label{equ:mu_0}
%		\bmu_0\left(\bvartheta_{0}\right) = \frac{1}{2}\bSigma_{0}\left(\bvartheta_{0}\right)\btheta_{0},
%	\end{equation}
%	\begin{equation}\label{equ:Sigma_0}
%		\bSigma_{0}\left(\bvartheta_{0}\right) = \left( \bD^{-1}-\Diag{\bbnu_0} \right)^{-1},
%	\end{equation}
%\end{subequations}
%\begin{subequations}\label{equ:mu_n and Sigma_n}
%	\begin{equation}\label{equ:mu_n}
%		\bmu_n\left(\bvartheta\right) = \bSigma_{n}\left(\bvartheta\right)\left( \frac{y_n}{\sigma_z^2}\bgamma_n + \frac{1}{2}\btheta \right),
%	\end{equation}
%	\begin{equation}\label{equ:Sigma_n}
%		\bSigma_{n}\left(\bvartheta\right) = \bLambda\left(\bbnu\right) - \frac{1}{\beta\left(\bbnu\right)}\bLambda\left(\bbnu\right)\bgamma_n\bgamma_n^H\bLambda\left(\bbnu\right),
%	\end{equation}
%\end{subequations}
%We write $\bmu_n$ and $\bSigma_{n}$ as functions w.r.t. the NPs since we will use the relationship between the NPs and means and covariances in the following.

We now introduce the iteration of the SIGA.
Mathematically, the iteration of SIGA can be summarized as using the value of $\bvartheta$ at the $t$-th iteration (denoted as $\bvartheta\left(t\right)$) to compute its value at the $\left(t+1\right)$-th iteration, i.e., $\bvartheta\left(t+1\right)$, until convergence.
After $\bvartheta$ is converged, we calculate $\bvartheta_{0}$ as $\bvartheta_{0} = \frac{N}{N-1}\bvartheta$.
Then,  $p\left(\bh;\bvartheta_0\right)$ in \eqref{equ:definition of M0} is referred as to the approximation of the product of the marginals of the \textsl{a posreriori} distributions, i.e., $\prod_{i=1}^{M}p_i\left(h_i|\by\right)$.
And the \textsl{a posteriori} mean and variance are given by $\bmu_0\left(\bvartheta_{0}\right)$ and the diagonal of $\bSigma_{0}\left(\bvartheta_{0}\right)$, respectively, where 
\begin{subequations}\label{equ:mu_0 and Sigma_0}
	\begin{equation}\label{equ:mu_0}
		\bmu_0\left(\bvartheta_{0}\right) = \frac{1}{2}\bSigma_{0}\left(\bvartheta_{0}\right)\btheta_{0},
	\end{equation}
	\begin{equation}\label{equ:Sigma_0}
		\bSigma_{0}\left(\bvartheta_{0}\right) = \left( \bD^{-1}-\Diag{\bbnu_0} \right)^{-1},
	\end{equation}
\end{subequations}
and both $\bmu_0\left(\bvartheta_{0}\right)$ and  $\bSigma_{0}\left(\bvartheta_{0}\right)$ are determined by vector $\bvartheta_{0}$.
Specifically, given $\bvartheta\left(t\right) = \funf[\btheta\left(t\right),\bbnu\left(t\right)]$ at the $t$-th iteration, $\bvartheta\left(t+1\right) = \funf[\btheta\left(t+1\right),\bbnu\left(t+1\right)]$ is then calculated as \eqref{equ:tilde vartheta_s} \cite{SIGA}, where  $0<d\le 1$ is the damping factor, and $\bLambda\left(\bbnu\left(t\right)\right)$ and $\beta\left(\bbnu\left(t\right)\right)$ are given by
\begin{subequations}\label{equ:Lambda and beta}
	\begin{equation}\label{equ:Lambda}
		\bLambda\left( \bbnu\left(t\right) \right) = \left( \bD^{-1} - \Diag{\bbnu\left(t\right)}\right)^{-1},
	\end{equation}
	\begin{equation}\label{equ:beta}
		\beta\left(\bbnu\left(t\right)\right) = {\sigma}_z^2 + \tr{\bLambda\left(\bbnu\left(t\right)\right)}.
	\end{equation}
\end{subequations}
Briefly, $\bvartheta\left(t+1\right)$ is obtained by the $m$-projections of $p_n\left(\bh;\bvartheta\left(t\right)\right), n\in \setposi{N}$, onto the OBM.
The detailed calculation can be found in the Sec. \uppercase\expandafter{\romannumeral4} of \cite{SIGA}.
We summarize the process of the  SIGA in Algorithm \ref{Alg:SIGA}. 
Comparing to \cite[Algorithm 2]{SIGA}, we fix the range of  $\bbnu$ in the initialization, i.e., 
\begin{equation}
	-\frac{N-1}{\sigma_z^2}\mathbf{1}\le \bbnu\left(0\right) \le \bzero,
\end{equation}
which will be explained in detail below.
We shall see that this new range of $\bbnu$ will ensure the convergence of SIGA.

\begin{algorithm}[htbp]
	\SetAlgoNoLine % 去掉算法中的竖线
	%	\SetKwInOut{Input}{\textbf{Input}}\SetKwInOut{Output}{\textbf{Output}} % 替换关键词
%	\footnotesize % 缩小表格字体
	\caption{SIGA}
	\label{Alg:SIGA}
	
	\KwIn{The covariance $\bD$ of the a priori distribution $p\left(\bh\right)$, the received signal $\by$, the noise power $\sigma_z^2$ and the maximal iteration number $t_{\mathrm{max}}$.}
	
	\textbf{Initialization:} set $t=0$, set damping $d$, where $0< d\le 1$, initialize the common NP as $\bvartheta\left(0\right) = \funf[\btheta\left(0\right),\bbnu\left(0\right)]$ and ensure $ -\frac{N-1}{\sigma_z^2}\mathbf{1}\le \bbnu\left(0\right)\le \bzero$; 
	%	initialize $\bvartheta_0\left(0\right)$ as ${\bvartheta_0}\left(0\right) = \frac{N}{N-1}\bvartheta\left(0\right)$;
	
	\Repeat{\rm{Convergence or $t > t_{\mathrm{max}}$}}{
		1. Update $\bvartheta = \funf[\btheta,\bbnu]$ as \eqref{equ:tilde vartheta_s}, where $\bLambda\left(\bbnu\left(t\right)\right)$ and $\beta\left(\bbnu\left(t\right)\right)$ are given by \eqref{equ:Lambda} and \eqref{equ:beta}, respectively;\\
		2. $t = t+1$;}
	
	\KwOut{\rm{Calculate the NP of $p_0\left(\bh;\bvartheta_{0}\right)$ as $\bvartheta_{0} = \frac{N}{N-1}\bvartheta\left(t\right)$.The mean and variance of the approximate marginal, $p_i\left( h_i|\by\right)$, $i\in \setposi{M}$, are given by the $i$-th component of $\bmu_0$ and $\diag{\bSigma_{0}}$, respectively, where $\bmu_0$ and $\bSigma_{0}$ are calculated by  \eqref{equ:mu_0 and Sigma_0}. }}
\end{algorithm}

\begin{figure*}
	\begin{subequations}\label{equ:tilde vartheta_s}
		\begin{equation}\label{equ:tilde theta_s}
			\btheta\left(t+1\right) = 
			\frac{d\left(N-1\right)}{N}\left( \bI - \frac{1}{\beta\left(\bbnu\left(t\right)\right)}\bLambda\left(\bbnu\left(t\right)\right) \right)^{-1}\left[ \frac{1}{\beta\left(\bbnu\left(t\right)\right)}\bA^H\left(2\by - \bA\bLambda\left(\bbnu\left(t\right)\right){\btheta}\left(t\right) \right) + N{\btheta}\left(t\right) \right] + \left(1-dN\right)\btheta\left(t\right)
		\end{equation}
		\begin{equation}\label{equ:tilde nu_s}
			\bbnu\left(t+1\right) = {d\left(N-1\right)}\diag{\bD^{-1}-\left( \bLambda\left(\bbnu\left(t\right)\right) - \frac{1}{\beta\left(\bbnu\left(t\right)\right)}\bLambda^2\left(\bbnu\left(t\right)\right)\right)^{-1}} + \left(1-dN\right)\bbnu\left(t\right)
		\end{equation}
	\end{subequations}
	\hrule
\end{figure*}

%We now introduce the $m$-condition of the fixed point of the SIGA, which will be used in the convergence analysis.
%Suppose that $\bvartheta$ could converge to a finite fixed point $\bvartheta^{\star} = \funf[\btheta^{\star},\bbnu^{\star}]$.
%Define $\bvartheta_0^\star \triangleq \funf[\btheta_{0}^{\star},\bbnu_{0}^{\star}] = N/\left(N-1\right)\bvartheta^\star$.
%Then, following \eqref{equ:Sigma_0} and \eqref{equ:Sigma_n},  define $\bSigma_{0}^{\star} = \bSigma_{0}\left(\bvartheta_{0}^{\star}\right)$ and $\bSigma_n^{\star} = \bSigma_{n}\left(\bvartheta^{\star}\right), n\in \setposi{N}$.	
%From \cite[Lemma 1]{SIGA}, we have
%\begin{equation}\label{equ:m-condition of variance}
%		\diag{\bSigma_{0}^{\star}} = \diag{\bSigma_{n}^{\star}}, n\in \setposi{N},
%\end{equation}
%which is called the $m$-condition of $\bvartheta^{\star}$ and $\bvartheta_{0}^\star$.
%\textcolor{blue}{The $m$-condition is obtained mainly from the property of the $m$-projection.
%And from \eqref{equ:Sigma_0} and \eqref{equ:Sigma_n}, we can find \eqref{equ:m-condition of variance} is only related to $\bbnu^\star$.
%This property will be used in the next.}

\subsection{Convergence}
From Algorithm \ref{Alg:SIGA}, we can find that to prove the convergence of SIGA, we only need to prove the convergence of $\bvartheta\left(t\right)$.
Given $\bvartheta\left(t\right) = \funf[\btheta\left(t\right),\bbnu\left(t\right)]$ at the $t$-th time,
we re-express $\bvartheta\left(t+1\right) = \funf[\btheta\left(t+1\right),\bbnu\left(t+1\right)]$ in \eqref{equ:tilde vartheta_s} as the follows:
\begin{subequations}\label{equ:update of SIGA}
	    \begin{equation}\label{equ:update of nu_0}
		\bbnu\left(t+1\right) = \tilde{\bg}\left(\bbnu\left(t\right)\right) \triangleq d \bg\left(\bbnu\left(t\right)\right) + \left(1-d\right)\bbnu\left(t\right),
	\end{equation}
	\begin{equation}\label{equ:function g text}
		\bg\left( \bbnu\left(t\right) \right) = -\left(N-1\right)\diag{ \left(  \beta\left(\bbnu\left(t\right)\right)\bI - \bLambda\left(\bbnu\left(t\right)\right) \right)^{-1}  },
	\end{equation}
\end{subequations}
\begin{subequations}\label{equ:intermideate variable}
			\begin{equation}\label{equ:update of theta_0}
		\btheta\left(t+1\right) 
		= \tilde{\bB}\left(\bbnu\left(t\right)\right)\btheta\left(t\right) + \bb\left(\bbnu\left(t\right)\right),
	\end{equation}
		\begin{equation}\label{equ:tilde B}
		\tilde{\bB}\left(\bbnu\left(t\right)\right) \triangleq d\bB\left(\bbnu\left(t\right)\right) + \left(1-d\right)\bI,
	\end{equation}
	\begin{equation}\label{equ:B}
		\begin{split}
			\bB\left(\bbnu\left(t\right)\right) = &\frac{N-1}{\beta\left(\bbnu\left(t\right)\right)}\left( \bI - \frac{1}{\beta\left(\bbnu\left(t\right)\right)}\bLambda\left(\bbnu\left(t\right)\right) \right)^{-1} \\
			&\ \ \times\left(\bI - \frac{1}{N}\bA^H\bA\right)\bLambda\left(\bbnu\left(t\right)\right),	
		\end{split}
	\end{equation}
	\begin{equation}
		\bb\left(\bbnu\left(t\right)\right) = \frac{2d\left(N-1\right)}{N\beta\left(\bbnu\left(t\right)\right)}\left(\bI - \frac{1}{\beta\left(\bbnu\left(t\right)\right)}\bLambda\left(\bbnu\left(t\right)\right)\right)^{-1}\bA^H\by,
	\end{equation}
\end{subequations}
%\begin{subequations}
%\begin{equation}\label{equ:Lambda0 in text}
%	\bLambda\left(\bbnu\left(t\right)\right) = \left( \bD^{-1} - \Diag{\bbnu\left(t\right)} \right)^{-1},
%\end{equation}
%\begin{equation}\label{equ:beta0 in text}
%	\beta\left(\bbnu\left(t\right)\right) = \sigma_z^2 + \tr{\bLambda\left(\bbnu\left(t\right)\right)},
%\end{equation}
%\end{subequations}
where \eqref{equ:update of SIGA} is the iterating system of $\bbnu$, \eqref{equ:intermideate variable} is the iterating system of $\btheta$, and the derivations are given in Appendix \ref{proof:calculation of update}.
All the above matrices that need to be inverted are shown to be invertible at each iteration in Appendix \ref{proof:calculation of update}, which guarantees the iterating systems defined by \eqref{equ:update of SIGA} and \eqref{equ:intermideate variable} are valid,
$\tilde{\bB}$ and $\bB$ are two matrix functions with $\bbnu\left(t\right)$ being the variable, i.e., $\tilde{\bB},\bB:\bbR^{M} \to \bbC^{M\times M}$,
and $\bb$, $\tilde{\bg}$ and $\bg$ are three vector functions with $\bbnu\left(t\right)$ being the variable, i.e., $\bb:\bbR^{M} \to \bbC^{M}$, 
and $ \tilde{\bg}, \bg:\bbR^{M} \to \bbR^{M}$. 
Also, we have the following lemma.
\begin{lemma}\label{prop:0}
	Given a finite initialization  $\bvartheta\left(0\right) = \funf[\btheta\left(0\right),\bbnu\left(0\right)]$ with $-\frac{N-1}{\sigma_z^2}\mathbf{1}\le  \bbnu\left(0\right) \le  \bzero$, then at each iteration,  $\bvartheta\left(t\right) = \funf[\btheta\left(t\right),\bbnu\left(t\right)]$ satisfies: $\btheta\left(t\right)$ and $\bbnu\left(t\right)$ are finite, and $\bbnu\left(t\right) \le  \bzero$.
	Specifically, we have $\bbnu\left(0\right) \le \bzero$ and $\bbnu\left(t\right) < \bzero, t \ge 1$.
\end{lemma}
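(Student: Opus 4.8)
The plan is to prove the statement by induction on $t$, exploiting the fact that, by \eqref{equ:update of SIGA}, the recursion for $\bbnu$ does not involve $\btheta$ and can therefore be analyzed on its own; once the claim for $\bbnu$ is established, the claim for $\btheta$ follows from the affine recursion \eqref{equ:intermideate variable}. I would actually carry the slightly stronger invariant $-\frac{N-1}{\sigma_z^2}\bone \le \bbnu\left(t\right) \le \bzero$ through the induction for $t\ge 1$, since it self-propagates more cleanly than $\bbnu\left(t\right)\le\bzero$ alone and immediately yields both finiteness and the sign claim; the base case $t=0$ is the hypothesis.

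For the inductive step on $\bbnu$, assume $-\frac{N-1}{\sigma_z^2}\bone \le \bbnu\left(t\right) \le \bzero$. Since $\bD$ is positive definite diagonal and $\Diag{\bbnu\left(t\right)}\le\bzero$, we get $\bD^{-1}-\Diag{\bbnu\left(t\right)}\succeq \bD^{-1}\succ\bzero$, so $\bLambda\left(\bbnu\left(t\right)\right)$ in \eqref{equ:Lambda} is a well-defined diagonal matrix with $0<\left[\bLambda\left(\bbnu\left(t\right)\right)\right]_{ii}\le \left[\bD\right]_{ii}$. Hence $\beta\left(\bbnu\left(t\right)\right)=\sigma_z^2+\tr{\bLambda\left(\bbnu\left(t\right)\right)}$ is finite and, crucially, $\beta\left(\bbnu\left(t\right)\right)-\left[\bLambda\left(\bbnu\left(t\right)\right)\right]_{ii}\ge\sigma_z^2>0$ for every $i$, because the trace in \eqref{equ:beta} already contains the $i$-th diagonal entry of $\bLambda$. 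Thus $\beta\left(\bbnu\left(t\right)\right)\bI-\bLambda\left(\bbnu\left(t\right)\right)$ is diagonal with strictly positive diagonal entries, its inverse has diagonal entries in $\left(0,1/\sigma_z^2\right]$, and therefore $\bg\left(\bbnu\left(t\right)\right)$ in \eqref{equ:function g text} satisfies $-\frac{N-1}{\sigma_z^2}\bone \le \bg\left(\bbnu\left(t\right)\right) < \bzero$ (here $N\ge 2$). Since $0<d\le 1$, the update $\bbnu\left(t+1\right)=d\bg\left(\bbnu\left(t\right)\right)+\left(1-d\right)\bbnu\left(t\right)$ combines a strictly negative vector (scaled by $d>0$) with a nonpositive one, so $\bbnu\left(t+1\right)<\bzero$; componentwise it also lies between $-\frac{N-1}{\sigma_z^2}\bone$ and $\bzero$, so the invariant is preserved and $\bbnu\left(t+1\right)$ is finite. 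Applying this at $t=0$ already gives $\bbnu\left(1\right)<\bzero$, which yields the ``specifically'' clause $\bbnu\left(t\right)<\bzero$ for $t\ge 1$.

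For $\btheta$, the same bounds (using only $\bbnu\left(t\right)\le\bzero$) show that $\bI-\beta\left(\bbnu\left(t\right)\right)^{-1}\bLambda\left(\bbnu\left(t\right)\right)$ is diagonal with diagonal entries $\ge \sigma_z^2/\beta\left(\bbnu\left(t\right)\right)>0$, hence invertible with a bounded inverse; since $\bA$, $\by$, $\bD$ and $\sigma_z^2$ are fixed and finite, it follows from \eqref{equ:B} and the expression for $\bb$ that $\bB\left(\bbnu\left(t\right)\right)$, $\tilde{\bB}\left(\bbnu\left(t\right)\right)$ and $\bb\left(\bbnu\left(t\right)\right)$ are finite. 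Then the affine recursion \eqref{equ:update of theta_0} propagates finiteness: if $\btheta\left(t\right)$ is finite so is $\btheta\left(t+1\right)$, and $\btheta\left(0\right)$ is finite by hypothesis. This closes the induction.

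The only mildly delicate point I anticipate is establishing the uniform bound $\beta\left(\bbnu\left(t\right)\right)\bI-\bLambda\left(\bbnu\left(t\right)\right)\succeq \sigma_z^2\bI$ (and the analogous bound for $\bI-\beta^{-1}\bLambda$), since this is precisely what keeps the inverted matrices bounded and the iterates finite; it rests on the structural identity $\beta=\sigma_z^2+\tr{\bLambda}$ together with $\sigma_z^2>0$. Everything else is routine, and the invertibility of the matrices in \eqref{equ:update of SIGA}--\eqref{equ:intermideate variable} is in any case already recorded in \appref{proof:calculation of update}.
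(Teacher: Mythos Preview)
Your proposal is correct and follows essentially the same inductive argument as the paper's proof in \appref{proof:calculation of update}: show that $\bbnu\left(t\right)\le\bzero$ makes $\bLambda$ positive definite and $\beta\bI-\bLambda$ strictly positive diagonal, deduce $\bg\left(\bbnu\left(t\right)\right)<\bzero$ and hence $\bbnu\left(t+1\right)<\bzero$, then propagate finiteness of $\btheta$ through the affine recursion. The only difference is cosmetic: you carry the additional lower bound $-\frac{N-1}{\sigma_z^2}\bone\le\bbnu\left(t\right)$ through the induction, whereas the paper uses just ``$\bbnu\left(t\right)<\bzero$ and finite'' as its invariant and defers the explicit lower bound on $\tilde{\bg}$ to \lmref{lemma:function tilde{g}}.
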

\begin{proof}
	See in Appendix \ref{proof:calculation of update}.
\end{proof}
We refer matrix $\tilde{\bB}$ in \eqref{equ:tilde B} as the iterating system matrix of $\btheta$, which is determined by the common SONP $\bbnu$ and the measurement matrix $\bA$ at each iteration.
Combining \eqref{equ:update of nu_0} and \eqref{equ:update of theta_0}, we can see that $\bbnu\left(t+1\right)$ only depends on $\bbnu\left(t\right)$ and does not depend on $\btheta\left(t\right)$, 
while $\btheta\left(t+1\right)$ depends on both $\btheta\left(t\right)$ and $\bbnu\left(t\right)$. 
This shows that the iterating system of $\bbnu$ is separated from that of $\btheta$, and hence, the convergence of $\bbnu\left(t\right)$ can be checked individually.
We then consider the convergence of $\bbnu\left(t\right)$.
To this end, we first present the following lemma about the function $\tilde{\bg}\left(\bbnu\right)$ defined in \eqref{equ:update of nu_0}.

\begin{lemma}\label{lemma:function tilde{g}}
	Given $ \bbnu \le \bzero$, $\tilde{\bg}\left(\bbnu\right)$ satisfies the following two properties.\\
	1. Monotonicity: If $\bbnu < \bbnu' \le \bzero$, then $\tilde{\bg}\left( \bbnu\right) < \tilde{\bg}\left(\bbnu'\right)$.\\
	2. Scalability: Given a positive constant $0<\alpha<1$, we have $\tilde{\bg}(\alpha\bbnu)<\alpha\tilde{\bg}(\bbnu)$.\\
	Moreover, if $\tilde{\bg}_{\textrm{min}} \le \bbnu \le \bzero$ with $\tilde{\bg}_{\textrm{min}} \triangleq -\frac{N-1}{\sigma_z^2}\mathbf{1} \in \bbR^M$, we have $\tilde{\bg}_{\textrm{min}} < \tilde{\bg}\left(\bbnu\right) < \bzero$.
\end{lemma}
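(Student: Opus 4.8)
The plan is to verify the three properties by reducing the vector map $\tilde{\bg}$ to $M$ essentially decoupled scalar maps, exploiting the diagonality of $\bD$; these properties are exactly the ingredients needed for the monotone sandwiching argument that will later establish convergence of the iteration $\bbnu(t+1)=\tilde{\bg}(\bbnu(t))$. Write $d_i\triangleq[\bD]_{i,i}>0$ and, for $\bbnu\le\bzero$, put $\lambda_i(\bbnu)\triangleq d_i/(1-d_i\nu_i)\in(0,d_i]$, so that by \eqref{equ:Lambda} the matrix $\bLambda(\bbnu)=\Diag{\lambda_1(\bbnu),\ldots,\lambda_M(\bbnu)}$ is diagonal and positive definite. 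The one structural fact that drives everything is a cancellation: $\beta(\bbnu)\bI-\bLambda(\bbnu)$ is diagonal with $i$-th entry
\[
\beta(\bbnu)-\lambda_i(\bbnu)=\sigma_z^2+\tr{\bLambda(\bbnu)}-\lambda_i(\bbnu)=\sigma_z^2+\sum_{j\ne i}\lambda_j(\bbnu)\ge\sigma_z^2>0,
\]
so it is invertible (consistent with \appref{proof:calculation of update}) and, by \eqref{equ:function g text}, the $i$-th component of $\bg$ equals
\[
g_i(\bbnu)=-\frac{N-1}{\sigma_z^2+\sum_{j\ne i}\lambda_j(\bbnu)},
\]
which does not depend on $\nu_i$ itself and is strictly increasing in each of the remaining $\lambda_j$.

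Next I would record two elementary facts about the scalar map $\nu\mapsto\lambda_i(\nu)=d_i/(1-d_i\nu)$ on $(-\infty,0]$: it is strictly increasing (derivative $d_i^2/(1-d_i\nu)^2>0$), and for $0<\alpha<1$ and $\nu\le0$ it is strictly ``super-scalable,''
\[
\lambda_i(\nu)-\alpha\,\lambda_i(\alpha\nu)=\frac{d_i(1-\alpha)}{(1-d_i\nu)(1-\alpha d_i\nu)}>0 .
\]
Pushing these through the increasing map $t\mapsto-(N-1)/t$ on $(0,\infty)$ transfers them to $\bg$: if $\bbnu<\bbnu'\le\bzero$, then $\sigma_z^2+\sum_{j\ne i}\lambda_j(\bbnu)<\sigma_z^2+\sum_{j\ne i}\lambda_j(\bbnu')$ (a nonempty, hence strict, sum when $M\ge2$), so $g_i(\bbnu)<g_i(\bbnu')$; and combining $\alpha\sigma_z^2<\sigma_z^2$ with $\alpha\lambda_j(\alpha\bbnu)<\lambda_j(\bbnu)$ gives $\alpha\big(\sigma_z^2+\sum_{j\ne i}\lambda_j(\alpha\bbnu)\big)<\sigma_z^2+\sum_{j\ne i}\lambda_j(\bbnu)$, so $g_i(\alpha\bbnu)<\alpha\,g_i(\bbnu)$. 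The two-sided bound is immediate from the same expression: $\sigma_z^2+\sum_{j\ne i}\lambda_j(\bbnu)>\sigma_z^2$ gives $-\tfrac{N-1}{\sigma_z^2}<g_i(\bbnu)<0$.

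Finally I would lift from $\bg$ to $\tilde{\bg}=d\bg+(1-d)\bbnu$ in \eqref{equ:update of nu_0}, using $0<d\le1$. Componentwise $\tilde{g}_i(\bbnu)=d\,g_i(\bbnu)+(1-d)\nu_i$, so: monotonicity is inherited because both summands are nondecreasing in $\bbnu$ and $d>0$ makes the first strictly increasing; scalability is inherited because the identity part cancels exactly, $\tilde{g}_i(\alpha\bbnu)-\alpha\,\tilde{g}_i(\bbnu)=d\big(g_i(\alpha\bbnu)-\alpha\,g_i(\bbnu)\big)<0$; and for $\tilde{\bg}_{\textrm{min}}\le\bbnu\le\bzero$, i.e.\ $-\tfrac{N-1}{\sigma_z^2}\le\nu_i\le0$, we obtain $\tilde{g}_i(\bbnu)<0$ from $g_i(\bbnu)<0$ and $\nu_i\le0$, and $\tilde{g}_i(\bbnu)>-\tfrac{N-1}{\sigma_z^2}$ from $g_i(\bbnu)>-\tfrac{N-1}{\sigma_z^2}$, $\nu_i\ge-\tfrac{N-1}{\sigma_z^2}$, and convexity of the combination.

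The main obstacle is only apparent. Because $\beta(\bbnu)=\sigma_z^2+\tr{\bLambda(\bbnu)}$ couples all coordinates, $\bg$ looks globally entangled, which would block any termwise monotonicity or scalability argument; the cancellation in the first step removes exactly this coupling, after which each $g_i$ sees the other coordinates only through the increasing scalars $\lambda_j(\nu_j)$ and the whole claim collapses to the two scalar inequalities for $\lambda_i$ together with the trivial behaviour of a convex combination with the identity. The only residual care is ensuring the strict inequalities survive, which uses $M\ge2$ (so that $\sum_{j\ne i}$ is a nonempty sum of positive terms) and $d>0$.
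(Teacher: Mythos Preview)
Your proof is correct and follows essentially the same approach as the paper: both hinge on the cancellation that yields $g_i(\bbnu)=-\tfrac{N-1}{\sigma_z^2+\sum_{j\ne i}\lambda_j(\bbnu)}$, from which monotonicity, scalability, and the two-sided bound follow componentwise. Your version is simply more explicit where the paper writes ``not difficult to see,'' and your remark that strictness of the monotonicity for $\bg$ requires $M\ge2$ is a careful observation the paper omits.
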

\begin{proof}
	See in Appendix \ref{proof:function tilde{g}}.
\end{proof}

Based on Lemma \ref{lemma:function tilde{g}}, we then have the following theorem. 
\begin{theorem}\label{The-theta_1 conv}
	Given initialization $\bbnu\left(0\right)$ with $ \tilde{\bg}_{\textrm{min}}\le  \bbnu\left(0\right) \le  \bzero$, where $\tilde{\bg}_{\textrm{min}}$ is defined in Lemma \ref{lemma:function tilde{g}}, the sequence $\bbnu\left(t+1\right) = \tilde{\bg}\left( \bbnu\left(t\right) \right)$ converges to a finite fixed point $\bbnu^\star$, where $\tilde{\bg}_{\textrm{min}}< \bbnu^{\star} < \bzero$.
\end{theorem}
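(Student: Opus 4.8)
The plan is to view the recursion $\bbnu(t+1) = \tilde{\bg}(\bbnu(t))$ of \eqref{equ:update of SIGA} as acting on the box $\cC \triangleq \{\bbnu \in \bbR^M : \tilde{\bg}_{\textrm{min}} \le \bbnu \le \bzero\}$ and to exploit the three properties gathered in \lmref{lemma:function tilde{g}} --- boundedness, monotonicity, scalability --- in the spirit of a (sign-reversed) standard interference iteration: first build a fixed point by bracketing monotone sequences, then use scalability to force the two brackets to coincide, and finally squeeze an arbitrary admissible trajectory between them. Two preliminary facts are needed. Forward invariance is immediate from the last line of \lmref{lemma:function tilde{g}}: $\bbnu \in \cC$ implies $\tilde{\bg}_{\textrm{min}} < \tilde{\bg}(\bbnu) < \bzero$, so every iterate stays in $\cC$ (indeed in its interior after the first step, consistent with \lmref{prop:0}). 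Continuity of $\tilde{\bg}$ on $\cC$ follows from \eqref{equ:function g text}: for $\bbnu \le \bzero$ the diagonal matrix $\bD^{-1} - \Diag{\bbnu}$ is positive definite, so $\bLambda(\bbnu)$, $\beta(\bbnu)$ and $\bigl(\beta(\bbnu)\bI - \bLambda(\bbnu)\bigr)^{-1}$ are all continuous there. I would also upgrade the strict monotonicity of \lmref{lemma:function tilde{g}} to the non-strict version $\bbnu \le \bbnu' \le \bzero \Rightarrow \tilde{\bg}(\bbnu) \le \tilde{\bg}(\bbnu')$ by applying the strict version to $\bbnu - \tfrac1k\bone < \bbnu'$ and letting $k \to \infty$.

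Next I would run the iteration from the two extreme starting points. From the top, $\overline{\bbnu}(0) = \bzero$ gives $\overline{\bbnu}(1) = \tilde{\bg}(\bzero) < \bzero$, and then monotonicity makes $\{\overline{\bbnu}(t)\}$ componentwise strictly decreasing while \lmref{lemma:function tilde{g}} keeps it above $\tilde{\bg}_{\textrm{min}}$; a bounded monotone sequence converges, so $\overline{\bbnu}(t) \to \overline{\bbnu}^\star$, and continuity forces $\overline{\bbnu}^\star = \tilde{\bg}(\overline{\bbnu}^\star)$, hence $\tilde{\bg}_{\textrm{min}} < \overline{\bbnu}^\star < \bzero$. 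From the bottom, $\underline{\bbnu}(0) = \tilde{\bg}_{\textrm{min}}$ yields, symmetrically, a strictly increasing sequence bounded above by $\bzero$ and converging to a fixed point $\underline{\bbnu}^\star \in (\tilde{\bg}_{\textrm{min}},\bzero)$; and since $\underline{\bbnu}(0) < \overline{\bbnu}(0)$, monotonicity propagates $\underline{\bbnu}(t) \le \overline{\bbnu}(t)$, so $\underline{\bbnu}^\star \le \overline{\bbnu}^\star$.

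The decisive and hardest step is to show $\underline{\bbnu}^\star = \overline{\bbnu}^\star$, and this is exactly where scalability is indispensable. Write $\bv^\star \triangleq \underline{\bbnu}^\star \le \overline{\bbnu}^\star =: \bw^\star$ and suppose $\bv^\star \ne \bw^\star$. Since $\bw^\star < \bzero$ strictly, the set $\{\alpha \ge 1 : \alpha\bw^\star \le \bv^\star\}$ is nonempty, closed and bounded below, so it has a minimum $\alpha^\star$, and $\alpha^\star > 1$ (otherwise $\bw^\star \le \bv^\star$, forcing $\bv^\star = \bw^\star$). Applying scalability with ratio $1/\alpha^\star \in (0,1)$ to the vector $\alpha^\star\bw^\star$ and using $\tilde{\bg}(\bw^\star) = \bw^\star$ gives $\alpha^\star\bw^\star < \tilde{\bg}(\alpha^\star\bw^\star)$; combining with non-strict monotonicity applied to $\alpha^\star\bw^\star \le \bv^\star$ and $\tilde{\bg}(\bv^\star) = \bv^\star$ yields $\alpha^\star\bw^\star < \bv^\star$ in every component, which by continuity contradicts the minimality of $\alpha^\star$ (a slightly smaller $\alpha \ge 1$ still obeys $\alpha\bw^\star \le \bv^\star$). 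Hence $\underline{\bbnu}^\star = \overline{\bbnu}^\star =: \bbnu^\star$, a finite fixed point with $\tilde{\bg}_{\textrm{min}} < \bbnu^\star < \bzero$.

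To finish, take any initialization with $\tilde{\bg}_{\textrm{min}} \le \bbnu(0) \le \bzero$. Non-strict monotonicity gives $\underline{\bbnu}(t) \le \bbnu(t) \le \overline{\bbnu}(t)$ for every $t$, and both bracketing sequences converge to $\bbnu^\star$, so $\bbnu(t) \to \bbnu^\star$ by the squeeze theorem. I expect the genuine difficulty to be concentrated in the uniqueness step: invariance, continuity and the monotone convergence of the two brackets are essentially bookkeeping on top of \lmref{prop:0} and \lmref{lemma:function tilde{g}}, but ruling out distinct fixed points really needs the scalability inequality, and one must stay careful about the sign convention (the iterates here are non-positive, so scaling by $\alpha \in (0,1)$ moves toward $\bzero$ rather than away from it, mirroring the usual interference-function setup) and about which of the inequalities are strict.
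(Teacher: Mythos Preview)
Your proof is correct and follows a genuinely different route than the paper's. The paper argues directly: if $\bbnu(1)\le\bbnu(0)$ then monotonicity of $\tilde{\bg}$ makes the trajectory itself componentwise decreasing, and boundedness from \lmref{lemma:function tilde{g}} gives convergence (the case $\bbnu(1)\ge\bbnu(0)$ is handled symmetrically). As written, however, that argument leaves unaddressed the mixed case where $\bbnu(1)-\bbnu(0)$ has components of both signs, and it never invokes the scalability clause of \lmref{lemma:function tilde{g}}, so uniqueness of the limit across initializations is not shown. Your bracketing argument --- iterating from the two extremes $\bzero$ and $\tilde{\bg}_{\textrm{min}}$, squeezing an arbitrary trajectory between the resulting monotone sequences, and using scalability to force the two bracketing limits to coincide --- is the classical (sign-reversed) standard-interference-function technique and resolves all of these issues at once: it works for every admissible initialization and yields a \emph{unique} fixed point in the box $[\tilde{\bg}_{\textrm{min}},\bzero]$. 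The paper's approach is shorter when the trajectory happens to be monotone from the start, but yours is complete and also explains why scalability was recorded in \lmref{lemma:function tilde{g}} in the first place.
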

\begin{proof}
	See in Appendix \ref{proof:theta_1 conv}.
\end{proof}
From Theorem \ref{The-theta_1 conv}, we can find that $\bbnu\left(t\right)$ converges to a finite fixed point as long as the initialization satisfies $ \tilde{\bg}_{\textrm{min}}\le \bbnu\left(0\right) \le \bzero$, and this range can be quite large.
For example, in the simulations of \cite{SIGA}, $N = 46080$, and when the noise variance is $\sigma_z^2 = 1$, then we obtain $\tilde{\bg}_{\textrm{min}} = -46079\times \bone$, which is a quite small negative vector. 
Theorem \ref{The-theta_1 conv} also shows that the convergence of $\bbnu\left(t\right)$ is not related to the damping factor $d$.
Yet we shall see that the convergence of $\btheta\left(t\right)$ is related to the choice of the damping factor later.
Define 
\begin{equation}\label{equ:Lambda0 star}
	\bLambda^\star = \left( \bD^{-1} - \Diag{\bbnu^\star} \right)^{-1},
\end{equation}
\begin{equation}\label{equ:beta0 star}
	\beta^\star = \sigma_z^2 + \tr{\bLambda^\star}.
\end{equation}
From Theorem \ref{The-theta_1 conv}, diagonal matrix $\bLambda^\star$ is positive definite and $\beta^\star > 0$.
From \eqref{equ:update of nu_0} and ${\bbnu^\star} = \tilde{\bg}\left(\bbnu^\star\right)$, we have $\bbnu^\star = \bg\left(\bbnu^\star\right)$.
Then, from the first equation of \eqref{equ:aux1 in AppA} in Appendix \ref{proof:calculation of update}, we have
\begin{equation}
	\frac{N}{N-1}\bbnu^\star = \diag{\bD^{-1} - \left( \bLambda^\star - \frac{1}{\beta^\star}\left(\bLambda^\star\right)^2 \right)^{-1}},
\end{equation}
which implies that
\begin{equation}\label{equ:condition of nu}
	\bLambda^\star - \frac{1}{\beta^\star}\left(\bLambda^\star\right)^2 = \left(\bD^{-1} - \frac{N}{N-1}\Diag{\bbnu^\star}   \right)^{-1}.
\end{equation}

Define 
\begin{equation}\label{equ:tilde B 0}
	\tilde{\bB}^{\star} = \tilde{\bB}\left( \bbnu^{\star} \right) = d\bB^{\star} + \left(1-d\right)\bI,
\end{equation}
where $\bB^{\star} = {\bB}\left( \bbnu^{\star} \right)$ and 
$\bb^\star = \bb\left( \bbnu^{\star} \right)$.
From the definition, $\tilde{\bB}^\star$ is determined by the fixed point of the common SONP $\bbnu^\star$ and the measurement matrix $\bA$, which does not vary with iterations.
To avoid ambiguity,  the iterating system matrix refers in particular to $\tilde{\bB}^\star$ in the rest of the paper, since the convergence condition for the iterating system of $\btheta\left(t\right)$ given in the following lemma depends only on the spectral radius of $\tilde{\bB}^\star$.
%The following theorem gives a condition for the convergence of $\btheta\left(t\right)$ in \eqref{equ:update of theta_0}.
\begin{lemma}\label{The-synchronous updates} 
	Given a finite initialization $\btheta\left(0\right) \in \bbC^{M \times 1}$ and $\bbnu\left(0\right)$ with $-\frac{N-1}{\sigma_z^2}\mathbf{1}\le   \bbnu\left(0\right) \le  \bzero$.
	Then,
	$\btheta\left(t\right)$ in \eqref{equ:intermideate variable} converges to its fixed point  if the spectral radius of $\tilde{\bB}^\star$ is less than $1$, i.e., $\rho\left(\tilde{\bB}^{\star}\right) < 1$, with $\rho\left(\tilde{\bB^\star}\right) = \max\braces{\left|\lambda\right|: \lambda \  \textrm{is an eigenvalue of } \tilde{\bB^\star}}$.  
\end{lemma}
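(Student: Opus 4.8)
The plan is to reduce \eqref{equ:update of theta_0} to a \emph{nonautonomous linear} recursion with asymptotically constant data and then run a perturbed--contraction argument. By \thref{The-theta_1 conv} the common SONP iterates converge, $\bbnu(t)\to\bbnu^\star$ with $\tilde{\bg}_{\textrm{min}}<\bbnu^\star<\bzero$, \emph{independently} of $\btheta$; and by \lmref{prop:0} together with \appref{proof:calculation of update}, every matrix inverted inside $\tilde{\bB}(\cdot)$ and $\bb(\cdot)$ is invertible along the whole trajectory $\{\bbnu(t)\}$ and at $\bbnu^\star$. Since matrix inversion is continuous on the invertible matrices, this gives $\tilde{\bB}(\bbnu(t))\to\tilde{\bB}^\star$ and $\bb(\bbnu(t))\to\bb^\star$, with $\tilde{\bB}^\star$ as in \eqref{equ:tilde B 0}. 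Thus \eqref{equ:update of theta_0} reads $\btheta(t+1)=\tilde{\bB}(\bbnu(t))\btheta(t)+\bb(\bbnu(t))$ with the pair $(\tilde{\bB}(\bbnu(t)),\bb(\bbnu(t)))$ converging to $(\tilde{\bB}^\star,\bb^\star)$.

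First I would pin down the limit. Because $\rho(\tilde{\bB}^\star)<1$, the number $1$ is not an eigenvalue of $\tilde{\bB}^\star$, so $\bI-\tilde{\bB}^\star$ is invertible; set $\btheta^\star\triangleq(\bI-\tilde{\bB}^\star)^{-1}\bb^\star$, the unique solution of $\btheta^\star=\tilde{\bB}^\star\btheta^\star+\bb^\star$. Together with $\bbnu^\star$ this is a fixed point of the coupled system \eqref{equ:update of nu_0}--\eqref{equ:update of theta_0}, so it is legitimately ``the'' fixed point of $\btheta$. Then, with $\be(t)\triangleq\btheta(t)-\btheta^\star$, $\bDelta(t)\triangleq\tilde{\bB}(\bbnu(t))-\tilde{\bB}^\star$, and $\bepsilon(t)\triangleq\bDelta(t)\btheta^\star+\bigl(\bb(\bbnu(t))-\bb^\star\bigr)$, subtracting the fixed-point identity from \eqref{equ:update of theta_0} and cancelling yields the error recursion
\[
\be(t+1)=\bigl(\tilde{\bB}^\star+\bDelta(t)\bigr)\be(t)+\bepsilon(t),\qquad \bDelta(t)\to\bzero,\quad\bepsilon(t)\to\bzero.
\]

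To finish, I would invoke that for every $\epsilon_0>0$ there is a submultiplicative matrix norm $\normmm{\cdot}$ with $\normmm{\tilde{\bB}^\star}\le\rho(\tilde{\bB}^\star)+\epsilon_0$; picking $\epsilon_0$ with $\kappa'\triangleq\rho(\tilde{\bB}^\star)+2\epsilon_0<1$ and $T$ with $\normmm{\bDelta(t)}\le\epsilon_0$ for all $t\ge T$, the compatible vector norm gives $\normmm{\be(t+1)}\le\kappa'\normmm{\be(t)}+\normmm{\bepsilon(t)}$ for $t\ge T$. Since $\be(T)$ is finite by \lmref{prop:0} and $\normmm{\bepsilon(t)}\to0$, iterating this inequality (and splitting off a vanishing tail of the forcing term) gives $\normmm{\be(t)}\to0$; by equivalence of norms on $\bbC^M$, $\btheta(t)\to\btheta^\star$.

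The main obstacle is the time-varying coefficient matrix: \eqref{equ:update of theta_0} is not a Banach contraction with a single fixed map, so the fixed-point theorem does not apply directly. The fix is the combination used above --- an operator norm arbitrarily close to $\rho(\tilde{\bB}^\star)$, plus the asymptotic vanishing of $\bDelta(t)$ and $\bepsilon(t)$ (which in turn rests on the invertibility bookkeeping of \appref{proof:calculation of update} and the SONP convergence of \thref{The-theta_1 conv}) --- treating the finite, possibly large transient $\be(0),\dots,\be(T)$ as a harmless separate case. A small point to check carefully is the continuity of $\tilde{\bB}(\cdot)$ and $\bb(\cdot)$ at $\bbnu^\star$, i.e.\ that $\bbnu^\star$ lies strictly inside the region where all the relevant inverses exist, which follows from $\tilde{\bg}_{\textrm{min}}<\bbnu^\star<\bzero$.
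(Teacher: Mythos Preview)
Your proposal is correct and follows essentially the same route as the paper's proof: define $\btheta^\star=(\bI-\tilde{\bB}^\star)^{-1}\bb^\star$, subtract to obtain an error recursion driven by the time-varying pair $(\tilde{\bB}(\bbnu(t)),\bb(\bbnu(t)))$ that converges to $(\tilde{\bB}^\star,\bb^\star)$, pick a submultiplicative norm with $\normmm{\tilde{\bB}^\star}<1$ (Horn--Johnson, Lemma~5.6.10), and run an $\epsilon$--$\delta$ perturbed-contraction argument. The only cosmetic difference is that the paper bounds $\normmm{\tilde{\bB}(\bbnu(t))}$ directly via continuity of the norm, whereas you split $\tilde{\bB}(\bbnu(t))=\tilde{\bB}^\star+\bDelta(t)$ and use the triangle inequality; the two are equivalent.
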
 
\begin{proof}
See in Appendix \ref{Proof-The-syn}.
\end{proof}

From Lemma \ref{The-synchronous updates}, we see that when $\bbnu$ converges and the spectral radius of the iterating system matrix in \eqref{equ:tilde B 0} is less than $1$, $\btheta$ converges. 
We next give an analysis of the eigenvalue distribution of $\tilde{\bB}^\star$ and a theoretical explanation for the improved convergence of $\btheta$ under damped updating.
We begin with the eigenvalues  of $\bB^\star$ in \eqref{equ:tilde B 0}.
As mentioned above, 
when $\bbnu$ converges to $\bbnu^{\star}$, from \eqref{equ:condition of nu} and \eqref{equ:tilde B 0} it is not difficult to obtain 
%\begin{subequations}
%		\begin{equation}\label{equ:tilde B star}
	%		\tilde{\bB}^{\star} = d\bB^{\star} + \left(1-d\right)\bI,
	%	\end{equation}
%	\begin{equation}
	%		\bb^\star = \frac{2d\left(N-1\right)}{\beta^\star N}\left(\bI - \frac{1}{\beta^\star}\bLambda^\star\right)^{-1}\bA^H\by,
	%	\end{equation}
%\end{subequations}
\begin{equation}\label{equ:tilde B star 1}
	\begin{split}
		\bB^\star &= \frac{N-1}{\beta^{\star}}\left( \bI - \frac{1}{\beta^{\star}}\bLambda^\star \right)^{-1}
		\left(\bI -\frac{1}{N}\bA^H\bA\right)\bLambda^\star \\
		&=	 \left( \bI - \frac{1}{N}\bD^{-1}\bLambda^{\star} \right)\left(N\bI - \bA^H\bA\right)\left( \frac{1}{\beta^{\star}}\bLambda^{\star} \right).
	\end{split}
\end{equation}
%Since $\bbnu^{\star} < \bo$, the matrices that need to be inverted above are all invertible.
%From \eqref{equ:condition of nu}, we can obtain that
%\begin{equation}
%	\begin{split}
	%		&\left(\bI - \frac{1}{\beta^{\star}}\bLambda^{\star} \right)^{-1} = \bLambda^{\star}\left(\bD^{-1} - \frac{N}{N-1}\Diag{\bbnu^\star}\right) \\
	%		=&\bLambda^{\star}\left[\frac{N}{N-1}\left( \bLambda^{\star}\right)^{-1} - \frac{1}{N-1}\bD^{-1} \right]\\
	%		=&\frac{N}{N-1}\bI - \frac{1}{N-1}\bLambda^{\star}\bD^{-1}.
	%	\end{split}
%\end{equation}
%Also, it is not difficult to check that the matrices which need to be inverted are invertible.
%Thus, $\bB^\star$ can be rewritten as
%\begin{equation}
%	\bB^{\star} = \left( \bI - \frac{1}{N}\bD^{-1}\bLambda^{\star} \right)\left(N\bI - \bA^H\bA\right)\left( \frac{1}{\beta^{\star}}\bLambda^{\star} \right).
%\end{equation}
We can find that ${\bB}^\star$ is the product of three matrices. 
The first matrix of the right hand side of \eqref{equ:tilde B star 1} satisfies the following property.
\begin{lemma}\label{prop:1}
	$\bI - \frac{1}{N}\bD^{-1}\bLambda^{\star}$ is diagonal with diagonal components all  positive and smaller than $1$. 
	%		$2$.  $N\bI - \bA^H\bA$ is Hermitian.\\
	%		$3$. $\frac{1}{\beta^{\star}}\bLambda^{\star}$ is diagonal positive definite.
\end{lemma}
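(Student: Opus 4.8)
The plan is to show the diagonal matrix $\bI - \frac{1}{N}\bD^{-1}\bLambda^\star$ has all diagonal entries in $(0,1)$ by exhibiting each entry explicitly and bounding it. Since $\bD$ is diagonal positive definite and $\bLambda^\star = (\bD^{-1}-\Diag{\bbnu^\star})^{-1}$ is diagonal positive definite (by Theorem~\ref{The-theta_1 conv}, $\bbnu^\star \le \bzero$ so $\bD^{-1}-\Diag{\bbnu^\star}\succ\bD^{-1}\succ\bzero$), the product $\bD^{-1}\bLambda^\star$ is diagonal with $i$-th entry $[\bD^{-1}]_{i,i}\big/\big([\bD^{-1}]_{i,i}-\nu^\star_i\big)$. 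Writing $d_i \triangleq [\bD]_{i,i}>0$, this is $\frac{1/d_i}{1/d_i - \nu^\star_i} = \frac{1}{1 - d_i\nu^\star_i}$. Hence the $i$-th diagonal entry of $\bI - \frac{1}{N}\bD^{-1}\bLambda^\star$ equals $1 - \frac{1}{N}\cdot\frac{1}{1-d_i\nu^\star_i}$, and the whole claim reduces to showing $0 < \frac{1}{N(1-d_i\nu^\star_i)} < 1$, i.e. $1 - d_i\nu^\star_i > \frac{1}{N}$ (upper bound is immediate since $N\ge 1$... actually we need $N(1-d_i\nu^\star_i)>1$ for positivity and $N(1-d_i\nu^\star_i)>... $ — let me restate).

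First I would note diagonality and positive-definiteness of both $\bD^{-1}$ and $\bLambda^\star$, so that $\bI - \frac{1}{N}\bD^{-1}\bLambda^\star$ is diagonal and the bounds can be checked entrywise. Then I would compute the $i$-th diagonal entry as $1 - \frac{1}{N(1 - d_i\nu^\star_i)}$ with $d_i = [\bD]_{i,i}$. For the \emph{lower bound} (entry $>0$) I need $N(1 - d_i\nu^\star_i) > 1$; since $\nu^\star_i < 0$ (Theorem~\ref{The-theta_1 conv} gives $\bbnu^\star<\bzero$) and $d_i>0$, we have $1 - d_i\nu^\star_i > 1$, hence $N(1-d_i\nu^\star_i) > N \ge 1$, giving strict positivity (using $N\ge 2$, which holds since the problem is nontrivial; even $N\ge1$ suffices for $\ge 0$, and $N\ge2$ gives $>0$). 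For the \emph{upper bound} (entry $<1$) I need $\frac{1}{N(1-d_i\nu^\star_i)} > 0$, which is immediate because $N>0$ and $1 - d_i\nu^\star_i>0$. Thus each diagonal entry lies strictly in $(0,1)$.

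The only genuinely nontrivial input is the sign information $\bbnu^\star < \bzero$ (strictly), which we take from Theorem~\ref{The-theta_1 conv}; everything else is an algebraic identity for diagonal matrices plus the positivity of $\bD$. I do not anticipate a real obstacle here — the lemma is essentially a bookkeeping step isolating the "good" factor of $\bB^\star$ for the subsequent eigenvalue analysis. If one wants to avoid invoking strictness of $\bbnu^\star<\bzero$ and only use $\bbnu^\star\le\bzero$, the lower bound would degrade to $\ge 0$ when some $\nu^\star_i=0$; but Theorem~\ref{The-theta_1 conv} supplies the strict inequality, so the stated conclusion (strictly positive, strictly less than $1$) follows cleanly.
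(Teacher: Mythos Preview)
Your proposal is correct and is essentially the same argument as the paper's. The paper simply notes that from the definition $\bLambda^\star=(\bD^{-1}-\Diag{\bbnu^\star})^{-1}$ one has $\bzero<\diag{\bD^{-1}\bLambda^\star}<\mathbf{1}$ (this is exactly your entrywise identity $[\bD^{-1}\bLambda^\star]_{i,i}=\frac{1}{1-d_i\nu_i^\star}\in(0,1)$, using $\nu_i^\star<0$), and then the claim on $\bI-\frac{1}{N}\bD^{-1}\bLambda^\star$ is immediate. Your hesitation about needing $N\ge 2$ is unnecessary: once $1-d_i\nu_i^\star>1$ strictly, you already have $N(1-d_i\nu_i^\star)>N\ge 1$ with strict inequality for any $N\ge 1$, so the lower bound holds without any extra assumption on $N$.
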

\begin{proof}
	From \eqref{equ:Lambda0 star}, we can obtain that $\bzero<\diag{\bD^{-1}\bLambda^{\star}} < \mathbf{1}$, which implies that the diagonal of $\bI - \frac{1}{N}\bD^{-1}\bLambda^{\star}$ is positive and smaller than $1$.
	This completes the proof.
\end{proof}

%We next give the following lemma to analyze the spectral radius of $\bB^{\star}$.
%
%\begin{lemma}\label{lemma:radius of matrices product}
%	Let $\bX$, $\bY$ and $\bZ$ be Hermitian matrices with the same dimension. 
%	The spectral radius $\rho\left(\bX\bY\bZ\right)$ satisfies
%	\begin{equation}
	%		\rho\left(\bX\bY\bZ\right) \le \rho\left(\bX\right)\rho\left(\bY\right)\rho\left(\bZ\right).
	%	\end{equation}
%\end{lemma}

Since all the three matrices in the product in \eqref{equ:tilde B star 1} are Hermitian, we have \cite[Exercise below Theorem 5.6.9]{horn2012matrix}
\begin{equation}\label{equ:range of eigen of B}
	\begin{split}
		&\rho\left(\bB^{\star}\right) \\
		\le &\rho\left( \bI - \frac{1}{N}\bD^{-1}\bLambda^{\star} \right)\rho\left(N\bI - \bA^H\bA\right)\rho\left( \frac{1}{\beta^{\star}}\bLambda^{\star} \right).
	\end{split}
\end{equation}
From Lemma \ref{prop:1}, we can obtain that
\begin{equation}\label{equ:aux1 in text}
	\rho\left( \bI - \frac{1}{N}\bD^{-1}\bLambda^{\star}  \right) <1.
\end{equation} 
%The following lemma gives the range of spectral radius for $N\bI - \bA^H\bA$.
%
%
%Then, we present the following lemma to obtain the range of $\rho\left(\frac{1}{\beta^\star}\bLambda^\star\right)$.
\begin{lemma}\label{Lemma-spectral radius}
	The spectral radius of $\bLambda^*$ satisfies 
	\begin{equation}
		\rho\left( \bLambda^\star \right) < \frac{\beta^\star}{N}.
	\end{equation}
\end{lemma}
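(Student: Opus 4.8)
The plan is to exploit that, at the fixed point, every matrix appearing in the argument is diagonal and positive definite, so that spectral radii reduce to largest diagonal entries, and then to extract the bound from an elementary positivity statement obtained from the fixed-point equation.

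First I would record the set-up. By \thref{The-theta_1 conv}, the limit $\bbnu^\star$ satisfies $\tilde{\bg}_{\textrm{min}} < \bbnu^\star < \bzero$, hence $\bD^{-1} - \Diag{\bbnu^\star}$ is diagonal and positive definite and, by \eqref{equ:Lambda0 star}, so is $\bLambda^\star$; writing $\lambda_i \triangleq [\bLambda^\star]_{i,i} > 0$ we have $\rho(\bLambda^\star) = \max_i \lambda_i$. Moreover, from \eqref{equ:beta0 star}, $\beta^\star = \sigma_z^2 + \sum_j \lambda_j$ with $\sigma_z^2 > 0$, so for every $i$ one has $\beta^\star - \lambda_i = \sigma_z^2 + \sum_{j \neq i} \lambda_j > 0$; equivalently, the diagonal matrix $\beta^\star \bI - \bLambda^\star$ is positive definite, in particular invertible.

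Next I would use the fixed point. Since $\bbnu^\star = \bg(\bbnu^\star)$ (which follows from \eqref{equ:update of nu_0} and $\bbnu^\star = \tilde{\bg}(\bbnu^\star)$), \eqref{equ:function g text} combined with \eqref{equ:Lambda}, \eqref{equ:beta} and the diagonality of $\beta^\star\bI - \bLambda^\star$ gives $\Diag{\bbnu^\star} = -(N-1)\left(\beta^\star\bI - \bLambda^\star\right)^{-1}$. Substituting into \eqref{equ:Lambda0 star} and inverting yields $\left(\bLambda^\star\right)^{-1} = \bD^{-1} + (N-1)\left(\beta^\star\bI - \bLambda^\star\right)^{-1}$. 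All factors here are diagonal and hence commute; multiplying through by $\bLambda^\star\left(\beta^\star\bI - \bLambda^\star\right)$ and collecting the $\bLambda^\star$ terms produces the identity $\beta^\star\bI - N\bLambda^\star = \bD^{-1}\bLambda^\star\left(\beta^\star\bI - \bLambda^\star\right)$. The right-hand side is a product of the diagonal positive definite matrices $\bD^{-1}$, $\bLambda^\star$ and $\beta^\star\bI - \bLambda^\star$, hence positive definite, so $N\bLambda^\star \prec \beta^\star\bI$; reading this entrywise, $N\lambda_i < \beta^\star$ for all $i$, and taking the maximum over $i$ gives $N\rho(\bLambda^\star) < \beta^\star$, i.e.\ the claim.

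I do not expect a real obstacle; the only mildly delicate points are the algebraic rearrangement into $\beta^\star\bI - N\bLambda^\star = \bD^{-1}\bLambda^\star\left(\beta^\star\bI - \bLambda^\star\right)$ and confirming that each factor on the right is positive definite. As a cross-check one can instead start from \eqref{equ:condition of nu}: reading it entrywise and eliminating $[\bbnu^\star]_i$ through $\lambda_i^{-1} = [\bD^{-1}]_{i,i} - [\bbnu^\star]_i$ yields the scalar relation $(N-1)\lambda_i = \left(\beta^\star - \lambda_i\right)\left(1 - [\bD^{-1}]_{i,i}\lambda_i\right)$, equivalently $N\lambda_i = \beta^\star - [\bD^{-1}]_{i,i}\lambda_i\left(\beta^\star - \lambda_i\right)$; since $[\bD^{-1}]_{i,i}\lambda_i > 0$ and $\beta^\star - \lambda_i > 0$, the subtracted term is strictly positive, so $N\lambda_i < \beta^\star$ follows immediately.
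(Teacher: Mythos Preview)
Your proposal is correct and follows essentially the same route as the paper. Both arguments substitute the fixed-point relation $\Diag{\bbnu^\star} = -(N-1)(\beta^\star\bI - \bLambda^\star)^{-1}$ into $(\bLambda^\star)^{-1} = \bD^{-1} - \Diag{\bbnu^\star}$ and then use positivity of $\bD^{-1}$; the paper drops $\bD^{-1}$ immediately to get the inequality $\bLambda^\star \prec \frac{1}{N-1}(\beta^\star\bI - \bLambda^\star)$, whereas you carry the exact identity $\beta^\star\bI - N\bLambda^\star = \bD^{-1}\bLambda^\star(\beta^\star\bI - \bLambda^\star)$ one step further before invoking the same positivity---but the mechanism is identical.
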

\begin{proof}
	See in Appendix \ref{{Proof-Lemma-spec}}.
\end{proof}
%Since $\beta^{\star}$ is positive (from \eqref{equ:beta0 star}), we can obtain 
%\begin{equation}\label{equ:aux2 in text}
%	\rho\left( \frac{1}{\beta^{\star}}\bLambda^{\star} \right)< \frac{1}{N}.
%\end{equation}

We next show some properties of the eigenvalues  of $\bB^{\star}$ and $\tilde{\bB}^\star$.
\begin{lemma}\label{lemma:eigens of B star}
	Denote the eigenvalues of $\bB^{\star}$ as $\lambda_{B,i}, i\in \setposi{M}$.
	Then,  $\braces{\lambda_{B,i}}_{i=1}^M$ are all real and
	\begin{equation}\label{equ:range of v_B}
		-\frac{\rho\left(N\bI - \bA^H\bA\right)}{N} < \lambda_{B,i} <1.
	\end{equation}
\end{lemma}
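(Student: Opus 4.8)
The plan starts from the second line of \eqref{equ:tilde B star 1}, which exhibits $\bB^\star=\bC_1\bC_2\bC_3$ as a Hermitian ``core'' $\bC_2:=N\bI-\bA^H\bA$ flanked by two diagonal positive definite matrices $\bC_1:=\bI-\frac{1}{N}\bD^{-1}\bLambda^\star$ and $\bC_3:=\frac{1}{\beta^\star}\bLambda^\star$ (positive definiteness of $\bC_1$ is \lmref{prop:1}; that of $\bC_3$ follows since $\bLambda^\star$ is positive definite and $\beta^\star>0$). Since the spectrum of $\bX\bY$ coincides with that of $\bY\bX$, and since the diagonal matrices $\bC_1,\bC_3$ commute, the $\lambda_{B,i}$ are exactly the eigenvalues of $\bC_2\bP$ with $\bP:=\bC_1\bC_3$ diagonal and positive definite; and $\bC_2\bP=\bP^{-1/2}\bigl(\bP^{1/2}\bC_2\bP^{1/2}\bigr)\bP^{1/2}$ is similar to the Hermitian matrix $\bP^{1/2}\bC_2\bP^{1/2}$. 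This already gives that the $\lambda_{B,i}$ are real, and it lets me recast the two bounds as Loewner-order statements about $\bP^{1/2}\bC_2\bP^{1/2}$.

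The one quantitative ingredient needed is $\bP\prec\frac{1}{N}\bI$, i.e. $\bP^{-1}\succ N\bI$: every diagonal entry of $\bP$ is the product of a diagonal entry of $\bC_1$, which lies in $(0,1)$ by \lmref{prop:1}, and a diagonal entry of $\bC_3$, which is at most $\rho(\bLambda^\star)/\beta^\star$ and hence strictly below $1/N$ by \lmref{Lemma-spectral radius}. With this the upper bound is immediate: $\lambda_{B,i}<1$ is equivalent (conjugating $\bI-\bP^{1/2}\bC_2\bP^{1/2}$ by $\bP^{-1/2}$) to $\bP^{-1}-(N\bI-\bA^H\bA)\succ\bzero$, and $\bP^{-1}-(N\bI-\bA^H\bA)=(\bP^{-1}-N\bI)+\bA^H\bA$ is a sum of a positive definite and a positive semidefinite matrix. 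For the lower bound, set $\rho:=\rho(N\bI-\bA^H\bA)$; then $\lambda_{B,i}>-\rho/N$ is equivalent to $(N\bI-\bA^H\bA)+\frac{\rho}{N}\bP^{-1}\succ\bzero$. Here $\lambda_{\min}(N\bI-\bA^H\bA)=N-\lambda_{\max}(\bA^H\bA)\ge-\rho$ because $\lambda_{\max}(\bA^H\bA)$ is at least the average eigenvalue $\tr{\bA^H\bA}/M=N$ (each column of $\bA$ has squared norm $N$), so $N\bI-\bA^H\bA\succeq-\rho\bI$; combining this with $\frac{\rho}{N}\bP^{-1}\succ\rho\bI$ (from $\bP^{-1}\succ N\bI$) closes the claim.

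I expect the only delicate point to be the similarity/congruence bookkeeping for the triple product that collapses everything onto one Hermitian matrix; once $\bB^\star$ is identified (up to similarity) with $\bP^{1/2}(N\bI-\bA^H\bA)\bP^{1/2}$, both bounds are short positive-definiteness checks fed by \lmref{prop:1} and \lmref{Lemma-spectral radius}. One caveat worth recording: the strict lower inequality uses $\rho(N\bI-\bA^H\bA)>0$, i.e. $\bA^H\bA\neq N\bI$ (which holds in particular whenever $N<M$, so that $\bA^H\bA$ is singular); in the degenerate case $\bA^H\bA=N\bI$ one has $\bB^\star=\bzero$ and the iteration is trivial.
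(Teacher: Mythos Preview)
Your proof is correct and follows essentially the same route as the paper: both conjugate $\bB^\star$ by the diagonal matrix $\bK=\left(\frac{1}{\beta^\star}\bLambda^\star\right)^{-1/2}\bC_1^{1/2}$ to obtain the Hermitian matrix $\bQ=\bP^{1/2}(N\bI-\bA^H\bA)\bP^{1/2}$, and both feed the bounds with \lmref{prop:1} and \lmref{Lemma-spectral radius} (which together give exactly your $\bP\prec\frac{1}{N}\bI$). Your upper-bound argument is the Loewner-order rephrasing of the paper's additive split $\bQ=\bQ_1+\bQ_2$ with $\bQ_1=N\bP$ and $\bQ_2=-\bP^{1/2}\bA^H\bA\bP^{1/2}$.

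The one place you deviate is the lower bound: the paper invokes the product-of-spectral-radii inequality \eqref{equ:range of eigen of B} directly, obtaining $\rho(\bB^\star)<\rho(\bC_1)\rho(N\bI-\bA^H\bA)\rho(\bC_3)<\rho(N\bI-\bA^H\bA)/N$, whereas you reuse the congruence trick and the trivial $N\bI-\bA^H\bA\succeq-\rho\bI$. Your version is slightly more self-contained (one mechanism for both bounds), while the paper's is shorter. A small remark: your appeal to the average-eigenvalue fact $\lambda_{\max}(\bA^H\bA)\ge N$ is unnecessary there---$\lambda_{\min}(\bC_2)\ge-\rho(\bC_2)$ holds for any Hermitian $\bC_2$ by definition of spectral radius. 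Your caveat about the degenerate case $\bA^H\bA=N\bI$ (where $\rho=0$ and the strict lower inequality collapses) is well taken; the paper's argument has the same implicit assumption.
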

\begin{proof}
	See in Appendix \ref{proof:eigens of B star}.
\end{proof}

Denote the eigenvalues of the iterating system matrix $\tilde{\bB}^{\star}$ in \eqref{equ:tilde B 0} as $\tilde{\lambda}_i, i \in \setposi{M}$.
From \eqref{equ:tilde B 0}, we have $\tilde{\lambda}_i = d\lambda_{B,i} + 1-d, i \in \setposi{M}$.
We then have the following lemma.
\begin{lemma}\label{lemma:eigenvalues of tilde B}
	The eigenvalues of $\tilde{\bB}^\star$ are all real and satisfy
	\begin{equation}
		1-d\left( 1+ \frac{\rho\left(N\bI - \bA^H\bA\right)}{N} \right) < \tilde{\lambda}_i <1.
	\end{equation}
\end{lemma}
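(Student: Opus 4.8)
The plan is to obtain this lemma directly from Lemma~\ref{lemma:eigens of B star} via the affine relation $\tilde{\lambda}_i = d\lambda_{B,i} + 1-d$ recorded just before the statement. Since $\tilde{\bB}^\star = d\bB^\star + (1-d)\bI$, every eigenvector of $\bB^\star$ with eigenvalue $\lambda_{B,i}$ is an eigenvector of $\tilde{\bB}^\star$ with eigenvalue $d\lambda_{B,i} + 1-d$, and conversely $\bB^\star = \frac{1}{d}\left(\tilde{\bB}^\star - (1-d)\bI\right)$ because $d>0$; hence the two spectra are in one-to-one affine correspondence. In particular the eigenvalues $\tilde{\lambda}_i$ are all real, since those of $\bB^\star$ are real by Lemma~\ref{lemma:eigens of B star}. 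I would emphasize here that this reality is inherited from $\bB^\star$ (a product of three Hermitian matrices, which itself need not be Hermitian) and not from any symmetry of $\tilde{\bB}^\star$.

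For the bounds I would simply push the interval $-\rho\left(N\bI - \bA^H\bA\right)/N < \lambda_{B,i} < 1$ supplied by Lemma~\ref{lemma:eigens of B star} through the map $\lambda \mapsto d\lambda + 1-d$, which is strictly increasing because $d>0$ and is a bijection of $\bbR$, so strict inequalities are preserved. The right endpoint is sent to $d\cdot 1 + 1-d = 1$, giving $\tilde{\lambda}_i < 1$; the left endpoint is sent to $-d\,\rho\left(N\bI - \bA^H\bA\right)/N + 1-d = 1 - d\left(1 + \rho\left(N\bI - \bA^H\bA\right)/N\right)$, giving the stated lower bound. This produces exactly the displayed double inequality.

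There is essentially no obstacle: the lemma is a one-line corollary of Lemma~\ref{lemma:eigens of B star} once the affine change of variables $\lambda_{B,i}\mapsto \tilde{\lambda}_i$ is made explicit, with $d>0$ (part of the standing assumption $0<d\le1$) being the only ingredient needed to guarantee that the map is invertible and order-preserving, so that both the reality statement and the strict bounds transfer without loss.
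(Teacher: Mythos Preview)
Your proof is correct and follows exactly the same approach as the paper, which simply states that the lemma is a direct result from Lemma~\ref{lemma:eigens of B star}. You have merely spelled out the affine transfer $\tilde{\lambda}_i = d\lambda_{B,i} + 1-d$ in more detail than the paper does.
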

\begin{proof}
	This is a direct result from Lemma \ref{lemma:eigens of B star}.
\end{proof}

Combining Lemmas \ref{lemma:eigens of B star} and \ref{lemma:eigenvalues of tilde B}, we have the following theorem.
\begin{theorem}\label{corol:rang of v'}
	Given a finite initialization $\btheta\left(0\right) \in \bbC^{M \times 1}$ and $\bbnu\left(0\right)$ with $-\frac{N-1}{\sigma_z^2}\mathbf{1}\le   \bbnu\left(0\right) \le  \bzero$.
	Then,
	$\btheta\left(t\right)$ in \eqref{equ:intermideate variable} converges to its fixed point  if the damping factor satisfies
	\begin{equation}\label{equ:range of d 0}
		d < \frac{2}{ 1+\frac{\rho\left(N\bI - \bA^H\bA\right)}{N}  }.
	\end{equation}
\end{theorem}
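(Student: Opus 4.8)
The plan is to combine the three preceding lemmas to bound the spectral radius of $\tilde{\bB}^\star$ and then invoke Lemma \ref{The-synchronous updates}. By Lemma \ref{lemma:eigenvalues of tilde B}, every eigenvalue $\tilde{\lambda}_i$ of $\tilde{\bB}^\star$ is real and satisfies $1-d\left(1+\frac{\rho\left(N\bI-\bA^H\bA\right)}{N}\right) < \tilde{\lambda}_i < 1$. Hence $\rho\left(\tilde{\bB}^\star\right) = \max_i\left|\tilde{\lambda}_i\right| < 1$ precisely when the lower bound exceeds $-1$, i.e. when $1-d\left(1+\frac{\rho\left(N\bI-\bA^H\bA\right)}{N}\right) > -1$. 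Solving this inequality for $d$ gives exactly the condition \eqref{equ:range of d 0}. Since Lemma \ref{The-synchronous updates} guarantees that $\btheta\left(t\right)$ converges to its fixed point whenever $\rho\left(\tilde{\bB}^\star\right) < 1$, the claim follows.

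More concretely, the steps I would carry out are: first, note that under the stated initialization $-\frac{N-1}{\sigma_z^2}\mathbf{1}\le\bbnu\left(0\right)\le\bzero$, Theorem \ref{The-theta_1 conv} ensures $\bbnu\left(t\right)$ converges to $\bbnu^\star$ with $\tilde{\bg}_{\textrm{min}}<\bbnu^\star<\bzero$, so that $\bLambda^\star$, $\beta^\star$, and $\tilde{\bB}^\star$ are all well-defined. Second, invoke Lemma \ref{lemma:eigenvalues of tilde B} to get the two-sided bound on the (real) eigenvalues of $\tilde{\bB}^\star$. Third, observe that the upper bound $\tilde{\lambda}_i<1$ is automatic, so controlling $\rho\left(\tilde{\bB}^\star\right)$ amounts to keeping the lower bound strictly above $-1$; impose $1-d\left(1+\frac{\rho\left(N\bI-\bA^H\bA\right)}{N}\right)>-1$ and rearrange to obtain \eqref{equ:range of d 0}. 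Fourth, conclude via Lemma \ref{The-synchronous updates}.

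There is essentially no obstacle here — the theorem is a short corollary assembled from machinery already established. The only points requiring a word of care are: (i) confirming that $\tilde{\lambda}_i<1$ strictly (so the upper end never controls the spectral radius), which is immediate from Lemma \ref{lemma:eigens of B star} giving $\lambda_{B,i}<1$ together with $d\le 1$, so $\tilde{\lambda}_i=d\lambda_{B,i}+1-d<1$; and (ii) noting that the derived bound on $d$ is compatible with the standing constraint $0<d\le 1$ (indeed $\frac{2}{1+\rho\left(N\bI-\bA^H\bA\right)/N}$ may be larger or smaller than $1$ depending on $\bA$, and in either case the intersection with $(0,1]$ is the operative range). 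The heaviest lifting — the eigenvalue localization of $\bB^\star$ via the Hermitian-product argument of \eqref{equ:range of eigen of B}, Lemma \ref{prop:1}, and Lemma \ref{Lemma-spectral radius} — was already done in proving Lemma \ref{lemma:eigens of B star}, so the present proof is a one-line deduction.
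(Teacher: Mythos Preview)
Your proposal is correct and mirrors the paper's own proof exactly: the paper states this theorem as ``a direct result from Lemmas \ref{The-synchronous updates} and \ref{lemma:eigenvalues of tilde B},'' which is precisely the combination you outline. Your added remarks on the well-definedness of $\tilde{\bB}^\star$ and the compatibility with $0<d\le 1$ are sound but not needed for the argument.
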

\begin{proof}
	This is a direct result from Lemmas \ref{The-synchronous updates} and \ref{lemma:eigenvalues of tilde B}.
\end{proof}

From Theorem \ref{corol:rang of v'}, we can find that SIGA will always converge with a sufficiently small damping factor  and the range of $d$ is mainly determined by $\rho\left(N\bI - \bA^H\bA\right)$. 
The spectral radius $\rho\left(N\bI - \bA^H\bA\right)$ depends on the measurement matrix $\bA$.
We next discuss the range of $\rho\left(N\bI - \bA^H\bA\right)$ in the worst case and give the range of damping factor accordingly.
The range of $\rho\left(N\bI - \bA^H\bA\right)$ and the corresponding range of damping factor in massive MIMO-OFDM channel estimation will be discussed in the next section.
\begin{theorem}\label{lemma:radius of NI-A^HA}
	The spectral radius of $N\bI - \bA^H\bA$ satisfies 
	\begin{equation}
		\rho\left(N\bI- \bA^H\bA \right)\le NM -N.
	\end{equation}
	If $\rank{\bA} = 1$, then $\rho\left(N\bI- \bA^H\bA \right) = NM-N$.
\end{theorem}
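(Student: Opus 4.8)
The plan is to reduce the statement to the spectrum of the Hermitian positive semidefinite matrix $\bA^H\bA$ together with a trace identity. First I would note that since every entry of $\bA$ has unit magnitude, each diagonal entry of $\bA^H\bA$ equals $\sum_{i=1}^{N}\abs{a_{i,j}}^{2} = N$, so $\tr{\bA^H\bA} = NM$. As $\bA^H\bA$ is positive semidefinite, its eigenvalues $\mu_1,\ldots,\mu_M$ are nonnegative and satisfy $\sum_{j=1}^{M}\mu_j = NM$; hence $0 \le \mu_j \le NM$ for every $j \in \setposi{M}$.

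For the bound, observe that $N\bI - \bA^H\bA$ is Hermitian with eigenvalues $N - \mu_j$, so $\rho\left(N\bI - \bA^H\bA\right) = \max_{j}\abs{N-\mu_j}$. Since $t \mapsto \abs{N-t}$ is convex, its maximum over $t \in [0,NM]$ is attained at an endpoint, which gives $\abs{N-\mu_j} \le \max\braces{N,\,NM-N}$, the two candidates coming from $\mu_j = 0$ and $\mu_j = NM$. When $M \ge 2$ we have $NM - N \ge N$, and therefore $\rho\left(N\bI - \bA^H\bA\right) \le NM - N$; the remaining case $M = 1$ gives $\bA^H\bA = N\bI$ and $\rho\left(N\bI - \bA^H\bA\right) = 0 = NM - N$, so the inequality holds in all cases.

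For the equality claim, I would use $\rank{\bA^H\bA} = \rank{\bA} = 1$, which forces $\bA^H\bA$ to have exactly one nonzero eigenvalue; this eigenvalue must then equal $\tr{\bA^H\bA} = NM$, while the other $M-1$ eigenvalues are $0$. Consequently the eigenvalues of $N\bI - \bA^H\bA$ are $N - NM$ (with multiplicity $1$) and $N$ (with multiplicity $M-1$), so $\rho\left(N\bI - \bA^H\bA\right) = \max\braces{NM-N,\,N} = NM - N$.

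I do not expect a genuine obstacle here; the only points needing care are handling the absolute value in $\max_j\abs{N-\mu_j}$ by comparing the two boundary values $N$ and $NM - N$, and disposing of the trivial $M=1$ corner case. As a sanity check on tightness, the all-ones matrix $\bA = \bone_N\bone_M^H$ has unit-magnitude entries, rank $1$, and $\bA^H\bA = N\bone_M\bone_M^H$ whose largest eigenvalue is $NM$, so it attains the bound.
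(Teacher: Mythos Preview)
Your proposal is correct and follows essentially the same approach as the paper: both compute $\tr{\bA^H\bA}=NM$ from the unit-magnitude entries, bound each eigenvalue of $\bA^H\bA$ in $[0,NM]$, and then compare the two endpoint values $N$ and $NM-N$ for $\abs{N-\mu_j}$. The only cosmetic differences are that the paper writes out the rank-one case via an explicit factorization $\bA=\ba\bb^H$ (whereas you use $\rank{\bA^H\bA}=\rank{\bA}=1$ plus the trace), and the paper simply assumes $M>1$ rather than treating $M=1$ separately.
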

\begin{proof}
	See in Appendix \ref{proof:radius of NI-A^HA}.
\end{proof}
\begin{corol}\label{corol:1}
Given a finite initialization $\btheta\left(0\right) \in \bbC^{M \times 1}$ and $\bbnu\left(0\right)$ with $-\frac{N-1}{\sigma_z^2}\mathbf{1}\le   \bbnu\left(0\right) \le  \bzero$.
Then,
$\btheta\left(t\right)$ in \eqref{equ:intermideate variable} converges to its fixed point  if the damping factor satisfies $d < \frac{2}{M}$.
\end{corol}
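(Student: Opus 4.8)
The plan is to derive the corollary as an immediate specialization of the damping-factor condition in Theorem \ref{corol:rang of v'}, using the worst-case spectral-radius estimate of Theorem \ref{lemma:radius of NI-A^HA}. No new machinery is needed: all the substantive work has already been done — the separation of the $\bbnu$-iteration from the $\btheta$-iteration, the convergence $\bbnu(t)\to\bbnu^\star$ (Theorem \ref{The-theta_1 conv}), the reduction of the convergence of $\btheta(t)$ to $\rho(\tilde{\bB}^\star)<1$ (Lemma \ref{The-synchronous updates}), and the localization of the eigenvalues of $\tilde{\bB}^\star$ in a real interval (Lemma \ref{lemma:eigenvalues of tilde B}, Theorem \ref{corol:rang of v'}) — so what remains is a one-line arithmetic comparison of two bounds.

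Concretely, I would proceed in three short steps. First, recall from Theorem \ref{corol:rang of v'} that, under the stated initialization $-\frac{N-1}{\sigma_z^2}\bone \le \bbnu(0) \le \bzero$ with $\btheta(0)$ finite, $\btheta(t)$ in \eqref{equ:intermideate variable} converges to its fixed point whenever
\begin{equation}
	d < \frac{2}{1 + \frac{\rho\left(N\bI - \bA^H\bA\right)}{N}}.
\end{equation}
Second, invoke Theorem \ref{lemma:radius of NI-A^HA} (the worst-case bound $\rho(N\bI - \bA^H\bA) \le NM - N$); dividing by $N>0$ gives $\rho(N\bI - \bA^H\bA)/N \le M-1$, hence $1 + \rho(N\bI - \bA^H\bA)/N \le M$. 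Third, since the map $x\mapsto 2/x$ is strictly decreasing on $(0,\infty)$ and the denominator above is positive (indeed at least $1$), this last inequality inverts to
\begin{equation}
	\frac{2}{1 + \frac{\rho\left(N\bI - \bA^H\bA\right)}{N}} \ge \frac{2}{M}.
\end{equation}
Therefore the hypothesis $d < 2/M$ forces $d < 2/(1 + \rho(N\bI - \bA^H\bA)/N)$, so the condition of Theorem \ref{corol:rang of v'} is met and $\btheta(t)$ converges.

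The only point that deserves a sentence of care is compatibility with the standing restriction $0 < d \le 1$ on the damping factor: for $M \ge 2$ the bound $d < 2/M$ already implies $d < 1$, and in the degenerate scalar case $M = 1$ the constraint $d \le 1$ is in any case imposed by Algorithm \ref{Alg:SIGA}, so the statement is consistent with the algorithm as written. I do not expect any genuine obstacle here — the difficulty was entirely absorbed into the earlier results, and the corollary is a one-step consequence; the only thing to get right is the direction of the inequality when passing from an upper bound on the denominator to a lower bound on the fraction.
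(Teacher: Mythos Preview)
Your proposal is correct and follows exactly the paper's own argument: the paper's proof of Corollary \ref{corol:1} states only that it is a direct consequence of Theorems \ref{corol:rang of v'} and \ref{lemma:radius of NI-A^HA}, which is precisely the combination you carry out. Your added arithmetic (bounding the denominator by $M$ and inverting) and the remark on compatibility with $0<d\le1$ simply make explicit what the paper leaves implicit.
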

\begin{proof}
	It is a direct result from Theorems \ref{corol:rang of v'} and \ref{lemma:radius of NI-A^HA}.
\end{proof}

From Corollary \ref{corol:1}, we can find that in the worst case, if $d< \frac{2}{M}$, then SIGA converges.

\section{Application to Massive MIMO-OFDM Channel Estimation}

In this section, we will discuss the range of $\rho\left(N\bI - \bA^H\bA\right)$ in massive MIMO-OFDM channel estimation, where the range of $d$ can be expanded.
We first consider the case where general pilot sequences with constant magnitude
property are adopted.
In this case, $\bA$ is given in \cite[Equation (8)]{SIGA} that is briefly described below. 
Let us first briefly introduce the system configuration and some notations in \cite{SIGA}.
Consider the following uplink massive MIMO-OFDM channel estimation problem: A base station equipped with $N_r = N_{r,v}\times N_{r,h}$ uniform planar array (UPA) serves $K$ single antenna users, where $N_{r,v}$ and $N_{r,h}$ are the numbers of the antennas at each vertical column and horizontal row, respectively.
The number of subcarriers and cyclic prefix  (CP) length of OFDM modulation are $N_c$ and $N_g$, respectively.
The number of subcarriers used for training is $N_p \le N_c$. 
Let $\bY \in \bbC^{N_r\times N_p}$  and $\bZ \in \bbC^{N_r \times N_p}$ be the space-frequency domain received signal and noise, respectively, 
then, we have the following received signal model \cite{SIGA}
\begin{equation}\label{rece signal model 1}
	\bY = \sum_{k=1}^{K}\bV\bH_k\bF^T\bX_k + \bZ,
\end{equation}
where $\bV$ is defined as
\begin{equation*}
	\bV \triangleq \bV_v \otimes \bV_h \in \bbC^{N_r \times F_vF_hN_r}, 
\end{equation*} 
$\bV_v\in\bbC^{N_{r,v} \times F_vN_{r,v}}$
and $\bV_h\in\bbC^{N_{r,h} \times F_hN_{r,h}}$ are partial discrete Fourier transformation (DFT) matrices, i.e.,
\begin{equation*}
	\bV_v = \tilde{\bI}_{N_{r,v}\times F_vN_{r,v}}\tilde{\bV}_v,  \ \bV_h = \tilde{\bI}_{N_{r,h}\times F_hN_{r,h}}\tilde{\bV}_h,
\end{equation*}  
$\tilde{\bV}_v$ and $\tilde{\bV}_h$ are $F_vN_{r,v}$ and $F_hN_{r,h}$ dimensional DFT matrices, respectively, $\tilde{\bI}_{N\times FN}$ is a matrix containing the first $N$ rows of the $FN$ dimensional identity matrix,  
$F_v$ and $F_h$ are two fine (oversampling) factors,
$\bF$ is defined as 
\begin{equation*}
	\bF  \triangleq \tilde{\bI}_{N_p\times F_\tau N_p}\tilde{\bF}\tilde{\bI}_{F_\tau N_p\times F_\tau N_f} 	\in \bbC^{N_p \times N_\tau N_f},
\end{equation*}
$\tilde{\bF}$ is the $F_\tau N_p$ dimensional DFT matrix, $\tilde{\bI}_{F_\tau N_p\times F_\tau N_f}$ is a matrix containing the first $F_\tau N_f$ columns of the $F_\tau N_p$ dimensional identity matrix, 
i.e., $\bF$ is the matrix obtained by $\tilde{\bF}$ after row extraction and column extraction,
\begin{equation*}
	N_f \triangleq \lceil {N_pN_g}/{N_c}\rceil,
\end{equation*}
$F_\tau$ is also a fine factor,
$\bH_k \in \bbC^{F_vF_hN_r \times F_\tau N_f}$ is the beam domain channel matrix of user $k$,
whose components follow the independent
complex Gaussian distributions with zero mean and possibly different variances, 
the diagonal matrix $\bX_k$ is the training signal of user $k$ satisfying
\begin{equation*}
	\bX_k^H\bX_k = \bI,
\end{equation*} 
and $\bZ$ is the noise matrix whose components are independent and identically distributed (i.i.d.) complex
Gaussian random variables with zero mean and variance $\sigma_z^2$. 
For the notational convenience, let
\begin{equation}
	\bF_d \triangleq \tilde{\bI}_{N_p\times F_\tau N_p}\tilde{\bF},
\end{equation}
and we have 
\begin{equation}
	\bF = \bF_d\tilde{\bI}_{F_\tau N_p\times F_\tau N_f}.
\end{equation}
From the definitions, we can obtain that
\begin{equation*}
	\bV_v\bV_v^H = F_vN_{r,v}\bI, 
\end{equation*} 
\begin{equation*}
	\bV_h\bV_h^H = F_hN_{r,h}\bI,
\end{equation*}
\begin{equation*}
	\bF_d\bF_d^H = F_\tau N_p\bI.
\end{equation*}
Denote the power matrix of beam domain channel as 
\begin{equation}\label{equ:Omega}
	\bOmega_k \triangleq \Exp\braces{\bH_k\odot \bH_k^*}, k\in\setposi{K}.
\end{equation}
Due to the channel sparsity, most of the components in $\bOmega_k$ are (close to) zero \cite{IGA,channelaqyou}. 
Then, \eqref{rece signal model 1} can be rewritten as
\begin{equation}
	\bY = \bV\bH\bM + \bZ,
\end{equation}
where $\bH = \left[ \bH_1, \bH_2,   \cdots , \bH_K \right] \in \bbC^{ F_vF_hN_r\times KF_\tau N_f }$ and $\bM \!=\! \left[ \bX_1\bF,  \bX_2\bF, \cdots, \bX_K\bF \right]^{T} \!\!\in\! \bbC^{KF_\tau N_f \times N_p}$. 
After vectorization, we have
\begin{equation}\label{equ:maMIMO rece signal 3}
	\by = \tilde{\bA}\tilde{\bh} + \bz,
\end{equation}
where $\by, \bz\in\bbC^{N\times 1}$ and  $\tilde{\bh}\in\bbC^{\tilde{M}\times 1}$ are the vectorizations of $\bY$, $\bZ$ and $\bH$, respectively, $\tilde{\bA}\triangleq \bM^T\otimes\bV \in\bbC^{N\times \tilde{M}}$, $N =N_rN_p$, $\tilde{M} = KF_aF_\tau N_rN_f$, and $F_a = F_vF_h$.
Since most components in $\tilde{\bh}$ are zero, we reduce the dimension of variables by extracting non-zero components.
Let
\begin{equation*}
	\bomega \triangleq \mtxvec{\left[ \bOmega_1 , \bOmega_2 , \cdots , \bOmega_K \right]},
\end{equation*} 
and $M \triangleq \lVert \bomega \rVert_0$ be the number of components in $\tilde{\bh}$ with non-zero variance, where $\norm{\cdot}_0$ is the $\ell_0$ norm.
Let the indexes of non-zero components in $\bomega$ be $\mathcal{P} \triangleq \braces{p_1,p_2,\ldots,p_M}$,
where $1\le p_1 < p_2 <\ldots <p_M \le \tilde{M}$. 
Define the column extraction matrix as 
\begin{equation}\label{equ:E}
	\bE  \triangleq \left[ \be_{p_1},  \be_{p_2},  \ldots,  \be_{p_M} \right]\in \bbC^{\tilde{M}\times M},
\end{equation} where $\be_i\in\bbC^{\tilde{M}\times 1}, i\in\mathcal{P}$, is the $i$-th column of the $\tilde{M}$ dimensional identity matrix. 
\eqref{equ:maMIMO rece signal 3} can be re-expressed as
\begin{equation}\label{equ:maMIMO rece signal 4}
	\by = \bA\bh + \bz,
\end{equation}
where $\bA = \tilde{\bA}\bE \in\bbC^{N\times M}$ is the matrix of $\tilde{\bA}$ after column extraction, $\bh\in\bbC^{M\times 1} = \bE^T\tilde{\bh}$ is the vector of $\tilde{\bh}$ after variable extraction.
In \eqref{equ:maMIMO rece signal 4}$, \bh \sim\mathcal{CN}\left(\mathbf{0},\bD\right)$ with diagonal and positive definite $\bD \triangleq \Diag{\bE^T\bomega}$ and $\bz \sim \mathcal{CN}\left(\mathbf{0},\sigma_z^2\bI\right)$.
We then have the following theorem.
\begin{theorem}\label{lemma:SR in mMIMO-CE 1}
	For matrix $\bA$ in \eqref{equ:maMIMO rece signal 4}, we have, 
	\begin{equation}
		\rho\left(N\bI - \bA^H\bA\right) \le\left(  KF_vF_hF_\tau-1\right)  N.
	\end{equation}
	In this case, if 
	\begin{equation}
	d < \frac{2}{KF_vF_hF_\tau},	
	\end{equation}
	then SIGA converges.
\end{theorem}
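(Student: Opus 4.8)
The plan is to bound $\rho(N\bI - \bA^H\bA)$ by exploiting the Kronecker/partial-DFT structure of $\bA = \tilde{\bA}\bE = (\bM^T\otimes\bV)\bE$. First I would note that $\rho(N\bI - \bA^H\bA) = \max\{|N - \lambda| : \lambda \text{ an eigenvalue of } \bA^H\bA\}$, and since $\bA^H\bA$ is positive semidefinite, this equals $\max\{N - \lambda_{\min}(\bA^H\bA),\ \lambda_{\max}(\bA^H\bA) - N\}$. Because column extraction by $\bE$ only removes columns, $\bA^H\bA = \bE^T\tilde{\bA}^H\tilde{\bA}\bE$ is a principal submatrix of $\tilde{\bA}^H\tilde{\bA}$, so by the interlacing/compression property $\lambda_{\max}(\bA^H\bA) \le \lambda_{\max}(\tilde{\bA}^H\tilde{\bA})$ and $\lambda_{\min}(\bA^H\bA) \ge \lambda_{\min}(\tilde{\bA}^H\tilde{\bA}) \ge 0$. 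Hence it suffices to bound $\lambda_{\max}(\tilde{\bA}^H\tilde{\bA})$ from above by $KF_vF_hF_\tau N$ and observe $\lambda_{\min}(\bA^H\bA)\ge 0$ gives the other side automatically (since $N \le KF_vF_hF_\tau N$, the quantity $N - \lambda_{\min} \le N \le (KF_vF_hF_\tau-1)N$ whenever $KF_vF_hF_\tau \ge 2$, and the degenerate case is handled separately).

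Next I would compute $\lambda_{\max}(\tilde{\bA}^H\tilde{\bA})$. Using $\tilde{\bA} = \bM^T\otimes\bV$, we have $\tilde{\bA}^H\tilde{\bA} = (\bM^*\bM^T)\otimes(\bV^H\bV)$, so $\lambda_{\max}(\tilde{\bA}^H\tilde{\bA}) = \lambda_{\max}(\bM^*\bM^T)\cdot\lambda_{\max}(\bV^H\bV) = \lambda_{\max}(\bM\bM^H)\cdot\lambda_{\max}(\bV\bV^H)$. From the stated identities $\bV_v\bV_v^H = F_vN_{r,v}\bI$ and $\bV_h\bV_h^H = F_hN_{r,h}\bI$ together with $\bV = \bV_v\otimes\bV_h$, we get $\bV\bV^H = F_vF_hN_r\bI$, so $\lambda_{\max}(\bV\bV^H) = F_vF_hN_r$. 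For $\bM = [\bX_1\bF, \dots, \bX_K\bF]^T$, I would compute $\bM\bM^H$ using $\bX_k^H\bX_k = \bI$ and $\bF = \bF_d\tilde{\bI}_{F_\tau N_p\times F_\tau N_f}$ with $\bF_d\bF_d^H = F_\tau N_p\bI$; the block $(k,k')$ of $\bM\bM^H$ is $\bF^H\bX_k^H\bX_{k'}\bF$, and the sum of all blocks structure gives $\lambda_{\max}(\bM\bM^H) \le K\,\lambda_{\max}(\bF^H\bF) = K\lambda_{\max}(\bF\bF^H)$; then $\bF\bF^H = \bF_d\tilde{\bI}_{F_\tau N_p\times F_\tau N_f}\tilde{\bI}_{F_\tau N_p\times F_\tau N_f}^H\bF_d^H$ is a principal compression of $\bF_d\bF_d^H = F_\tau N_p\bI$, hence $\lambda_{\max}(\bF\bF^H) \le F_\tau N_p$. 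Combining, $\lambda_{\max}(\tilde{\bA}^H\tilde{\bA}) \le K F_\tau N_p \cdot F_vF_hN_r = KF_vF_hF_\tau\, N$ since $N = N_rN_p$. Therefore $\rho(N\bI - \bA^H\bA) \le \lambda_{\max}(\bA^H\bA) - N$ is at most $(KF_vF_hF_\tau - 1)N$ when this is nonnegative, and one checks the remaining case (when $\lambda_{\max}(\bA^H\bA) < N$, the bound $\rho \le N - \lambda_{\min}(\bA^H\bA) \le N \le (KF_vF_hF_\tau-1)N$ still holds provided $KF_vF_hF_\tau \ge 2$, which is the practically relevant regime).

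The convergence conclusion then follows immediately: by Theorem \ref{corol:rang of v'}, $\btheta(t)$ converges if $d < 2/(1 + \rho(N\bI - \bA^H\bA)/N)$, and substituting the bound $\rho(N\bI - \bA^H\bA)/N \le KF_vF_hF_\tau - 1$ yields $1 + \rho(N\bI - \bA^H\bA)/N \le KF_vF_hF_\tau$, so $d < 2/(KF_vF_hF_\tau)$ suffices; combined with Theorem \ref{The-theta_1 conv} for the convergence of $\bbnu(t)$, SIGA converges.

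I expect the main obstacle to be the bound $\lambda_{\max}(\bM\bM^H) \le K F_\tau N_p$: one must handle the cross-user blocks $\bF^H\bX_k^H\bX_{k'}\bF$ carefully, since $\bX_k^H\bX_{k'}$ need not be zero for $k\neq k'$. The cleanest route is to write $\bM^T = [\bX_1\bF \mid \cdots \mid \bX_K\bF]$ so $\bM^T = \bX\,(\bI_K\otimes\bF)$ with $\bX = [\bX_1 \mid \cdots \mid \bX_K]$, bound $\|\bX\|_2^2$ using $\bX_k^H\bX_k = \bI$ (each $\bX_k$ has unit-norm columns; stacking $K$ of them horizontally gives operator norm squared at most $K$ for the relevant product, or use $\bX\bX^H = \sum_k \bX_k\bX_k^H = \sum_k \bI = K\bI$ since each $\bX_k$ is square with orthonormal columns hence unitary), and then $\|\bM^T\|_2^2 \le \|\bX\|_2^2\,\|\bI_K\otimes\bF\|_2^2 = K\,\|\bF\|_2^2 \le K F_\tau N_p$. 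This reduces the whole estimate to the three given identities plus the compression (interlacing) principle, and the degenerate rank considerations mirror Theorem \ref{lemma:radius of NI-A^HA}.
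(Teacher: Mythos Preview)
Your proposal is correct and follows essentially the same route as the paper's proof: reduce to bounding $\rho(\bA^H\bA)$, pass to the full matrix $\tilde{\bA}$ via the principal-submatrix (interlacing) inequality, exploit the Kronecker structure to separate the $\bV$ and $\bM$ factors, bound the $\bM$ part by $KF_\tau N_p$ and the $\bV$ part by $F_vF_hN_r$, and then invoke Theorem~\ref{corol:rang of v'}. The only cosmetic differences are that the paper works with $\tilde{\bA}\tilde{\bA}^H$ (so that $\bV\bV^H=F_vF_hN_r\bI$ appears directly) and bounds $\rho\big(\sum_k \bX_k\bF\bF^H\bX_k^H\big)$ by eigenvalue subadditivity for PSD matrices, whereas your final paragraph's factorization $\bM^T=\bX(\bI_K\otimes\bF)$ with $\bX\bX^H=K\bI$ achieves the same bound via operator-norm submultiplicativity; both are equally valid.
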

\begin{proof}
	See in Appendix \ref{proof:SR in mMIMO-CE 1}.
\end{proof}
For the case with $K = 48$, $M = 29277$, and $F_v= F_h= F_\tau=2$,
when general pilot sequences with constant magnitude
property are adopted, $d < 0.0052$ is sufficient to ensure the convergence of SIGA.
Note that this range is much larger than the worst case  $ d< \frac{2}{M} = 6.8\times 10^{-5}$ in Corollary \ref{corol:1}.
We finally consider the special case, where the adjustable phase shift pilots (APSPs) \cite{channelaqyou}, are used.
In this case, $\bA$ is equal to $\bA_p$ of \cite[Equation (42)]{SIGA}, and we have 
\begin{equation}\label{equ:aux2 in text}
	\bA = \tilde{\bA}_p\bE_p,
\end{equation}
where 
\begin{equation*}
	\tilde{\bA}_p = \bF_d\otimes \bV \in \bbC^{N \times F_vF_hF_\tau N},
\end{equation*}
and $\bE_p \in \bbC^{F_vF_hF_\tau N \times M}$ is another column extraction matrix similar with $\bE$ in \eqref{equ:E}, whose detailed definition can be found in \cite[above Equation (42)]{SIGA}.
\begin{theorem}\label{lemma:SR in mMIMO-CE 2}
	For $\bA$ in \eqref{equ:aux2 in text}, we have, 
	\begin{equation}
		\rho\left(N\bI - \bA^H\bA\right) \le  \left(F_vF_hF_\tau -1\right)N.
	\end{equation}
	In this case, if 
	\begin{equation}
		d < \frac{2}{F_vF_hF_\tau},
	\end{equation}
	then SIGA converges.
\end{theorem}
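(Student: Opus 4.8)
The plan is to reduce the statement to an operator-norm bound on $\bA$ coming from the Kronecker structure $\tilde{\bA}_p = \bF_d\otimes\bV$, and then feed the resulting spectral-radius estimate into \thref{corol:rang of v'}. First I would evaluate $\tilde{\bA}_p\tilde{\bA}_p^H$ in closed form. Writing $\bV = \bV_v\otimes\bV_h$ and using associativity and the mixed-product property of the Kronecker product, $\tilde{\bA}_p\tilde{\bA}_p^H = (\bF_d\bF_d^H)\otimes(\bV_v\bV_v^H)\otimes(\bV_h\bV_h^H)$. Substituting the row-orthogonality relations $\bF_d\bF_d^H = F_\tau N_p\bI$, $\bV_v\bV_v^H = F_vN_{r,v}\bI$, $\bV_h\bV_h^H = F_hN_{r,h}\bI$ and recalling $N_r = N_{r,v}N_{r,h}$, $N = N_rN_p$, this collapses to $\tilde{\bA}_p\tilde{\bA}_p^H = F_vF_hF_\tau N\,\bI_N$; in particular $\|\tilde{\bA}_p\|_2^2 = F_vF_hF_\tau N$.

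Next I would transfer the bound to $\bA = \tilde{\bA}_p\bE_p$. Since $\bE_p$ is a column-extraction matrix, its columns are distinct standard basis vectors and hence orthonormal, so $\bzero\preceq\bE_p\bE_p^H\preceq\bI$ and $\|\bE_p\|_2 = 1$. Therefore $\bA\bA^H = \tilde{\bA}_p\bE_p\bE_p^H\tilde{\bA}_p^H \preceq \tilde{\bA}_p\tilde{\bA}_p^H = F_vF_hF_\tau N\,\bI$. As $\bA\bA^H$ and $\bA^H\bA$ share the same nonzero eigenvalues and $\bA^H\bA\succeq\bzero$, every eigenvalue $\lambda$ of $\bA^H\bA$ obeys $0\le\lambda\le F_vF_hF_\tau N$. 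Hence the eigenvalues of $N\bI-\bA^H\bA$, being of the form $N-\lambda$, lie in $[(1-F_vF_hF_\tau)N,\,N]$, so $\rho(N\bI-\bA^H\bA)\le\max\{N,(F_vF_hF_\tau-1)N\} = (F_vF_hF_\tau-1)N$, the last step using that the oversampling factors satisfy $F_vF_hF_\tau\ge 2$.

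The convergence claim is then immediate: the bound gives $1 + \rho(N\bI-\bA^H\bA)/N \le F_vF_hF_\tau$, hence $\frac{2}{F_vF_hF_\tau} \le \frac{2}{1+\rho(N\bI-\bA^H\bA)/N}$, so any damping factor with $d < \frac{2}{F_vF_hF_\tau}$ meets the hypothesis of \thref{corol:rang of v'}. Combined with \thref{The-theta_1 conv}, which already guarantees convergence of $\bbnu(t)$ for every $d$, this yields convergence of $\bvartheta(t) = \funf[\btheta(t),\bbnu(t)]$, i.e. of SIGA.

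I expect the real content to sit in the first step: one must take care that for the \emph{wide} partial-DFT blocks it is $\bF_d\bF_d^H$ (rather than $\bF_d^H\bF_d$) that equals a scaled identity, and thread the Kronecker mixed-product identity correctly through the three factors $\bF_d,\bV_v,\bV_h$. Everything after that is routine: the projection inequality $\bE_p\bE_p^H\preceq\bI$, the shared-nonzero-spectrum fact for $\bA\bA^H$ and $\bA^H\bA$, and the elementary substitution into \thref{corol:rang of v'}. This mirrors the proof of \thref{lemma:SR in mMIMO-CE 1}; the disappearance of the factor $K$ simply reflects that $\bA$ here no longer carries the $K$-user stacking built into $\bM$.
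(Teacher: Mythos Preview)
Your proposal is correct and follows essentially the same route as the paper: compute $\tilde{\bA}_p\tilde{\bA}_p^H = F_vF_hF_\tau N\,\bI$ from the Kronecker structure and the row-orthogonality of the partial-DFT blocks, pass the bound to $\bA$ via the column-extraction matrix, and then invoke \thref{corol:rang of v'}. The only cosmetic difference is that the paper phrases the extraction step as ``$\bA^H\bA$ is a principal submatrix of $\tilde{\bA}_p^H\tilde{\bA}_p$'' (inheriting the argument from \appref{proof:SR in mMIMO-CE 1}), whereas you write it as the Loewner inequality $\bA\bA^H\preceq\tilde{\bA}_p\tilde{\bA}_p^H$ via $\bE_p\bE_p^H\preceq\bI$; these are equivalent.
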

\begin{proof}
	See in Appendix \ref{proof:SR in mMIMO-CE 2}.
\end{proof}
{For the case with $F_v = F_h = F_\tau = 2$,
$d < 0.25$ is sufficient for SIGA to converge.}

\section{Simulation Results}
In this section, we present numerical simulations to illustrate the theoretical results.
This section is divided into two parts:
the first one focuses on the results for the general case, and the other one focuses on the massive MIMO-OFDM channel estimation.

\subsection{General Case}
For the general case, we first generate a matrix $\bA \in \bbC^{N\times M}$ with components drawn from i.i.d. $\mathcal{CN}\left(0,1\right)$ and then normalize its components so that their magnitude is $1$.
The dimension of $\bA$ is set to as $N = 300$ and $M = 150$.
In this experiment, the components of $\bh$ are drawn from i.i.d. $\mathcal{CN}\left(0,1\right)$, and the noise vector $\bz$ is a realization of white complex Gaussian noise.
The observation $\by$ is generated according to the received signal model $\by = \bA\bh + \bz$.
The variance of the noise $\bz$ is chosen to $\sigma_z^2 = 0.15$, and this value could achieve an SNR of  $30$ dB, where the SNR is defined as
$\textrm{SNR} \triangleq \Exp\braces{\norm{ \bA\bh }^2}/\Exp\braces{\norm{\bz}^2}$ in this experiment.
We focus on the convergence performance of $\bbnu$ as well as $\btheta$ in Part II of this paper.
A detailed analysis of the estimation performance of SIGA as well as numerical simulations can be found in \cite{SIGA}.

We first present the convergence performance of $\bbnu$ for different initializations as well as different damping factors.
To show the convergence of all components of $\bbnu$, we use $\norm{\bbnu}_2$ as a metric.
The initialization is set to $\bbnu\left(0\right) = \bzero$, $\bbnu\left(0\right) = \tilde{\bg}_{\textrm{min}} =  -\frac{N-1}{\sigma_z^2}\mathbf{1}$, and $\bbnu\left(0\right) =-1000\times \mathbf{1}$, respectively.
The damping factor is set to $d = 1$ and $d = 0.6$, respectively.
From Fig. \ref{Fig:random nu}, we can find $\norm{\bbnu}_2$ converges to the same fixed point in all settings and the fixed point satisfies $0<\norm{\bbnu^\star}_2 < \norm{\tilde{\bg}_\textrm{min}}_2$, which is in exact agreement with Theorem \ref{The-theta_1 conv}.
Note that the vertical axis of the image uses logarithmic coordinates, which results in $\ln(0) = -\infty$ cannot be drawn on the image.
We also find that the convergence rate of $\bbnu$ slows down with the decrease of the damping factor.

\begin{figure}[htbp]
	\centering
	\includegraphics[width=0.5\textwidth]{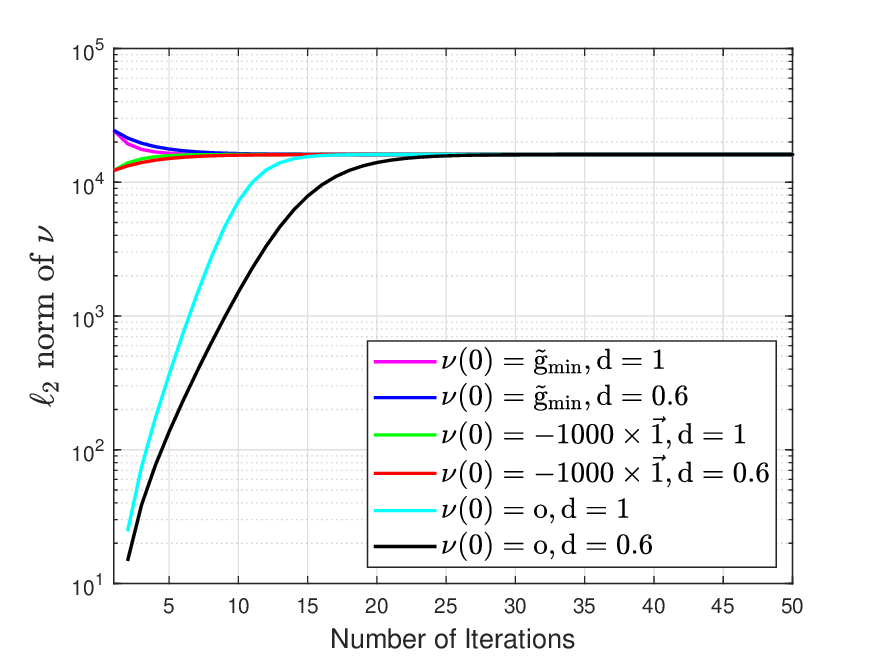}
	\caption{\small Convergence of $\norm{\bbnu\left(t\right)}$ for different initializations and damping factors in general case.}
	\label{Fig:random nu}
\end{figure}

\begin{figure}[htbp]
	\centering
	\includegraphics[width=0.5\textwidth]{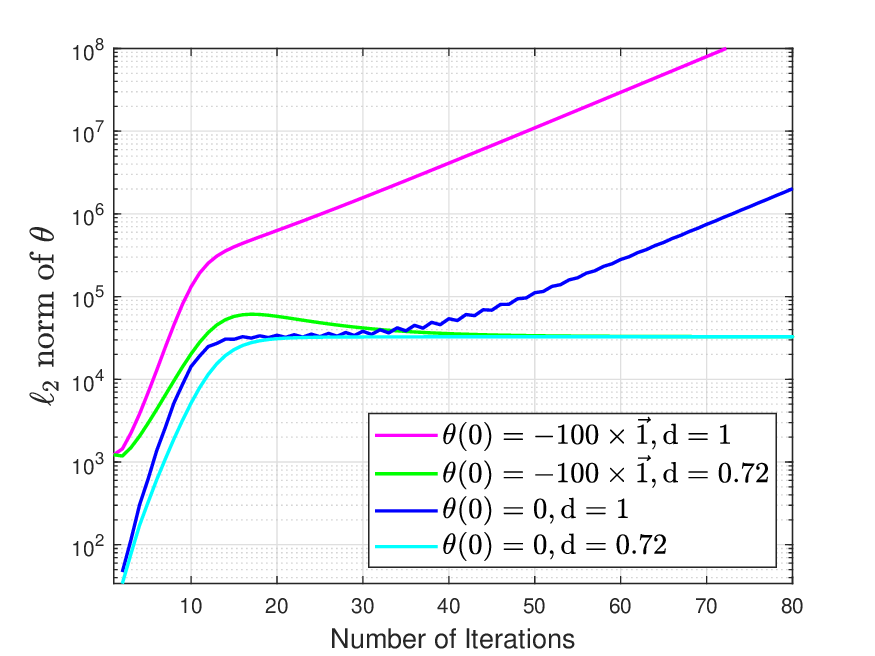}
	\caption{\small Convergence and divergence of $\norm{\btheta\left(t\right)}$ for different initializations and damping factors in general case.}
	\label{Fig:random theta}
\end{figure}

We next show the convergence performance of $\btheta$.
For $\bA$ in this experiment, we calculate $\rho\left( N\bI - \bA^H\bA \right) = 528.4643$.
According to Theorem \ref{corol:rang of v'}, $d < 0.7242$ is sufficient for the convergence of $\btheta$. 
We set the damping factor to $d = 1$ and $d = 0.72$, respectively.
For simplicity, the initialization of $\bbnu$ is set to $\bbnu\left(0\right) = \bzero$.
The initialization of $\btheta$ is set to $\btheta\left(0\right) = \bzero$ and $\btheta\left(0\right) = -100\times \mathbf{1}$, respectively.
From Fig. \ref{Fig:random theta}, we can find that $\norm{\btheta}_2$ converges to the same fixed point for all settings except for damping factor $d = 1$ (in this case, $\norm{\btheta}_2$ diverges).
This is exactly consistent with Theorem \ref{corol:rang of v'}.

\subsection{Massive MIMO Channel Estimation}
In this subsection, we focus on the convergence of $\bbnu$ and $\btheta$ in massive MIMO channel estimation.
The system parameter settings  as well as the acquisition of $\bh$ are detailed as described in \cite{IGA,SIGA}.
We summarize the system parameter settings in Table \ref{tab:para}.
\begin{table}[htbp]
	\centering
	\caption{Parameter Settings}\label{tab:para}
	%	\footnotesize
	\begin{tabular}{cc}
		\hline
		Parameter &Value \\
		\hline
		Number of BS antenna $N_{r,v}\times N_{r,h}$ & $8\times 16$ \\
		UT number $K$ & $48$ \\
		Center frequency $f_c$ & $4.8$GHz \\
		Number of training subcarriers $N_p$ & $360$ \\
		Subcarrier spacing $\Delta_f$ & $15$kHz \\
		Number of subcarriers $N_c$ & $2048$ \\
		CP length $\dnnot{N}{g}$ & $144$ \\
		Fine Factors $F_v, F_h, F_\tau$ & $2, 2, 2$ \\
		Mobile velocity of users   &   $3 - 10$kmph\\
		\hline
	\end{tabular}
\end{table}
Consider two types of pilots: the first is the general constant magnitude pilot and the other is APSPs.
They correspond to the cases described in Theorem \ref{lemma:SR in mMIMO-CE 1} and Theorem \ref{lemma:SR in mMIMO-CE 2}, respectively.

We first consider the general constant magnitude pilot.
In this case, the pilots of different users in \eqref{rece signal model 1} are generated as $\bX_k = \diag{\bx_k}, k\in \setposi{K}$, where the components of $\bx_k$ are drawn from i.i.d. $\mathcal{CN}\left(0,1\right)$ and then normalized to the unit magnitude.
Then, $\bA$ is generated as that in \eqref{equ:maMIMO rece signal 4}.
In this experiment, the dimension of $\bA$ is calculated as $N = 46080$ and $M = 29277$.
The variance of the noise $\bz$ is chosen to $\sigma_z^2 = 0.01$, and this value could achieve an SNR of  $20$ dB, where the SNR is defined as
$\textrm{SNR} \triangleq \frac{1}{\sigma_z^2}$ in this experiment \cite{SIGA}.
We then present the convergence performance of $\bbnu$.
The initialization of $\bbnu$ is set to be $\bbnu\left(0\right) = \bzero$, $\bbnu\left(0\right) = \tilde{\bg}_{\textrm{min}} = -\frac{N-1}{\sigma_z^2}\mathbf{1}$, and $\bbnu\left(0\right) = -\mathbf{1}$, repectively. The damping factor is the same as the previous experiment.
%The initialization is set to $\bbnu\left(0\right) = \bo$, $\bbnu\left(0\right) = \tilde{\bg}_{\textrm{min}} =  -\frac{N-1}{\sigma_z^2}\mathbf{1} = $, and $\bbnu\left(0\right) =-1\times \mathbf{1}$, respectively.
%And the damping factor is set to $d = 1$ and $d = 0.6$, respectively.
From Fig. \ref{Fig:ZC nu}, it can be found that $\norm{\bbnu}_2$ converges to the same fixed point in all settings, where the fixed point satisfies $0<\norm{\bbnu^\star}_2 < \norm{\tilde{\bg}_\textrm{min}}_2$. 
This observation is in exact agreement with Theorem \ref{The-theta_1 conv}.

\begin{figure}[htbp]
	\centering
	\includegraphics[width=0.5\textwidth]{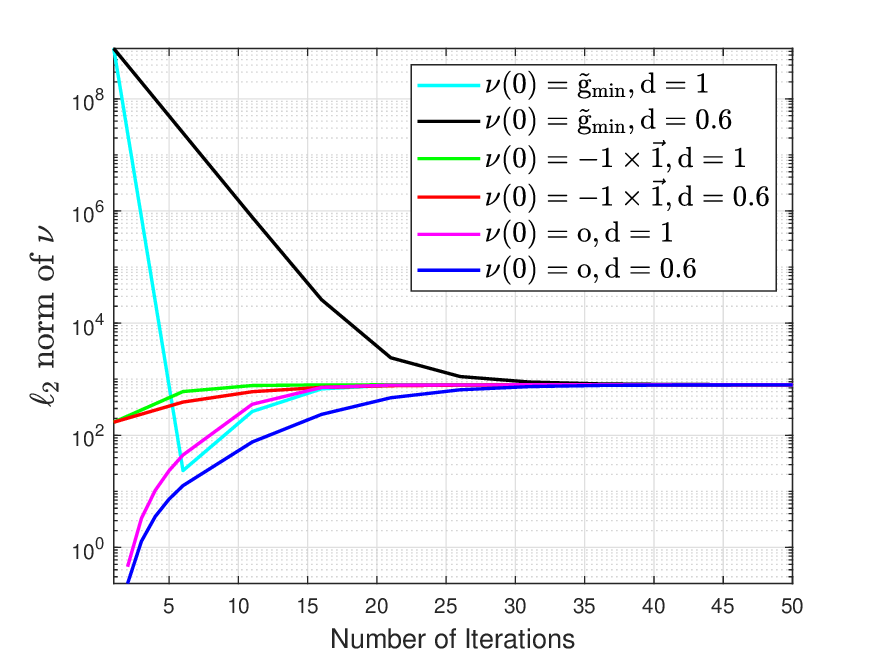}
	\caption{\small Convergence of $\norm{\bbnu\left(t\right)}$ for different initializations and damping factors for general constant magnitude pilot.}
	\label{Fig:ZC nu}
\end{figure}

\begin{figure}[htbp]
	\centering
	\includegraphics[width=0.5\textwidth]{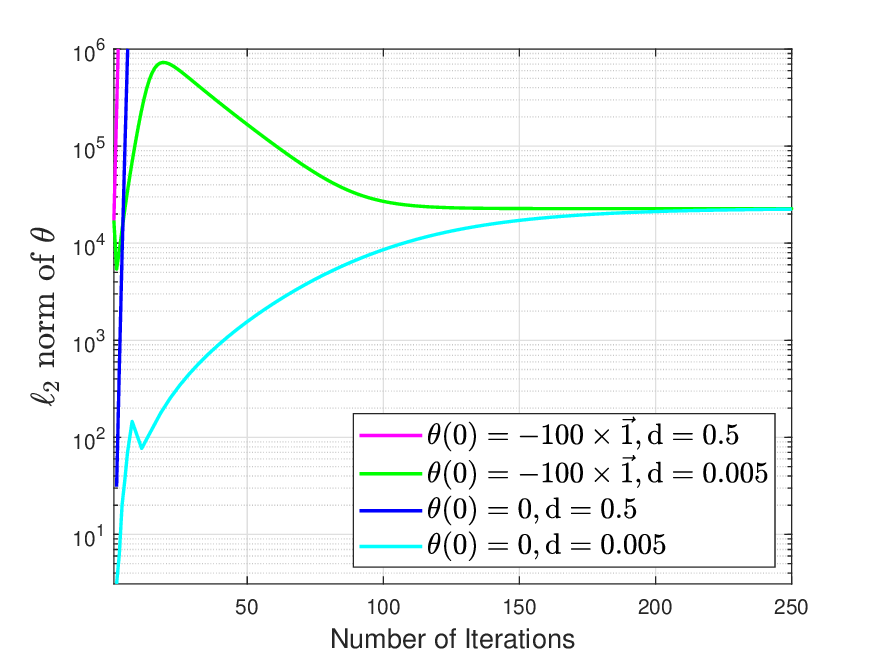}
	\caption{\small Convergence and divergence of $\norm{\btheta\left(t\right)}$ for different initializations and damping factors for general constant magnitude pilot.}
	\label{Fig:ZC theta}
\end{figure}

We now show the convergence performance of $\btheta$.
Due to the large dimension of $\bA$ in this experiment, the computational cost of $\rho\left( N\bI - \bA^H\bA \right)$ is relatively high.
We verify the range of damping factor in Theorem \ref{lemma:SR in mMIMO-CE 1}.
The damping factor is set to be $d = 0.5$ and $d = 0.005$, respectively.
Form Theorem \ref{lemma:SR in mMIMO-CE 1}, $d < 0.0052$ is sufficient to ensure the convergence of $\btheta$ in this experiment.
The initialization of $\bbnu$ is set to be $\bbnu\left(0\right) = \bzero$, and the initialization of $\btheta$ is set to be $\btheta\left(0\right) = \bzero$ and $\btheta\left(0\right) = -100\times \mathbf{1}$, respectively.
%The initialization of $\bbnu$ is set to $\bbnu\left(0\right) = \bo$.
%The initialization of $\btheta$ is set to $\btheta\left(0\right) = \bo$ and $\btheta\left(0\right) = -1\times \mathbf{1}$, respectively.
From Fig. \ref{Fig:ZC theta}, it can be found that $\norm{\btheta}_2$ diverges when $d = 0.5$ and converges to the same fixed point in case of $d = 0.005$.
This is consistent with Theorem \ref{lemma:SR in mMIMO-CE 1}.

\begin{figure}[htbp]
	\centering
	\includegraphics[width=0.5\textwidth]{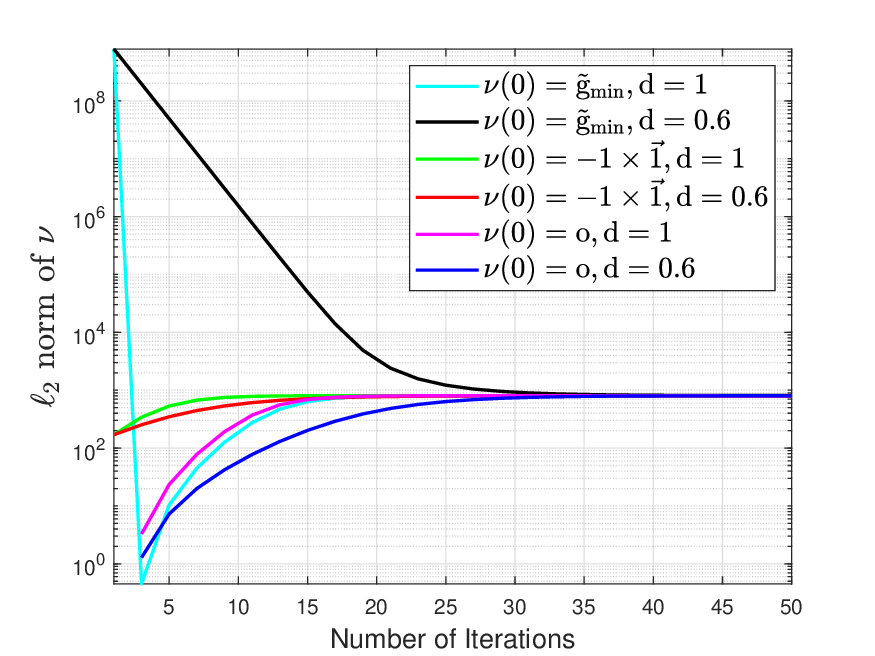}
	\caption{\small Convergence of $\norm{\bbnu\left(t\right)}$ for different initializations and damping factors for APSPs.}
	\label{Fig:APSP nu}
\end{figure}

\begin{figure}[htbp]
	\centering
	\includegraphics[width=0.5\textwidth]{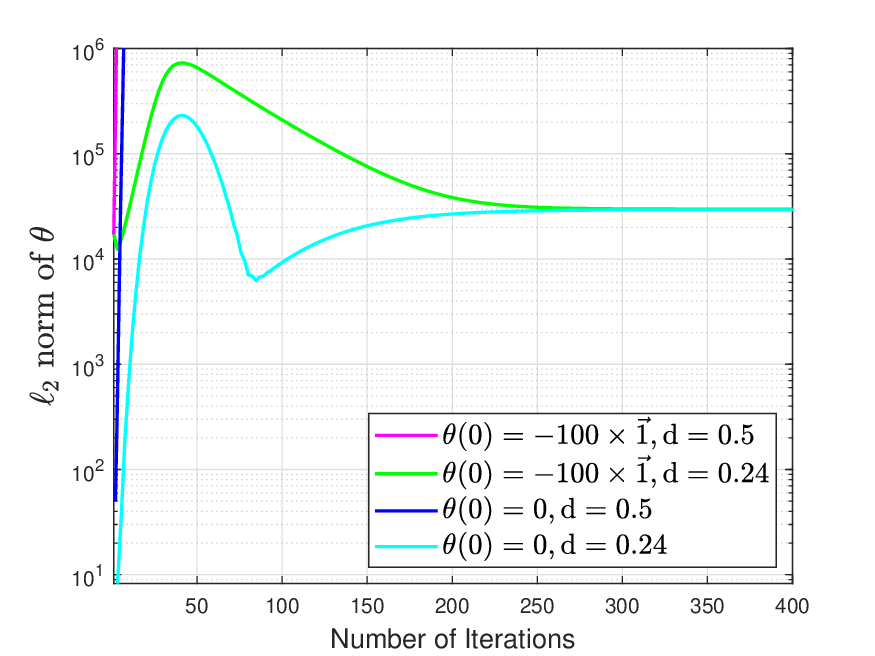}
	\caption{\small Convergence and divergence of $\norm{\btheta\left(t\right)}$ for different initializations and damping factors for APSPs.}
	\label{Fig:APSP theta}
\end{figure}

We finally consider the APSPs.
The APSP for the user $k$ is set to be $\bX_k = \Diag{\br\left(n_k\right)}\bP$, where
%$\br\left(n_k\right) = \left[\exp\braces{-\barjmath 2\pi\frac{n_kN_1}{F_\tau N_p}}, \cdots, \exp\braces{-\barjmath 2\pi\frac{n_kN_2}{F_\tau N_p}} \right]^T \in \bbC^{N_p \times 1}$,
\begin{equation}
	\begin{split}
			&\br\left(n_k\right) \\
			= &\left[\exp\braces{-\barjmath 2\pi\frac{n_kN_1}{F_\tau N_p}}, \cdots, \exp\braces{-\barjmath 2\pi\frac{n_kN_2}{F_\tau N_p}} \right]^T\in \bbC^{N_p},
	\end{split}
\end{equation} 
$n_k \in \braces{0, 1, \cdots, F_\tau N_p-1}$ is the phase shift scheduled for the user $k$, and $\bP = \Diag{\bp}$ is the  basic pilot satisfying $\bP\bP^H = \bI$.  
Given the channel power matrix $\bOmega_k, k\in \setposi{K}$, we can use \cite[Algorithm 1]{channelaqyou} to determine the value of $n_k$ and thus $\bX_k, k\in \setposi{K}$.
Then, the matrix $\bA$ is generated as that in \eqref{equ:aux2 in text} (details can be found in \cite{SIGA}).
In this experiment, the dimension of $\bA$ is calculated as $N = 46080$ and $M = 29277$.
The noise variance $\sigma_z^2$ as well as the SNR are the same as in the previous experiment.
The convergence performance of $\bbnu$ is plotted in Fig. \ref{Fig:APSP nu} and similar conclusions can be drawn, where the initialization of $\bbnu$ and the damping factor are the same as in the previous experiment.
We then show the convergence performance of $\btheta$ in Fig. \ref{Fig:APSP theta}, where the damping factor is set to be $d = 0.5$ and $d = 0.24$,  respectively, and the initialization of $\bbnu$ and $\btheta$ is set the same as in the previous experiment.
Form Theorem \ref{lemma:SR in mMIMO-CE 2}, $d < 0.25$ is sufficient to ensure the convergence of $\btheta$ in this experiment.
It can be found that $\norm{\btheta}_2$ diverges when $d = 0.5$ and converges to the same fixed point in case of $d = 0.24$.
This is consistent with Theorem \ref{lemma:SR in mMIMO-CE 2}.

\section{Conclusion}
We have investigated the convergence of SIGA for massive MIMO-OFDM channel estimation. We analyze the convergence of SIGA  for a general Bayesian inference problem with the measurement matrix of constant magnitude property, and then apply the  general theories to the case of massive MIMO-OFDM channel estimation.
Through revisiting its iteration, we find that the iterating system of the common SONP is independent of that of the common FONP.
Hence, we can check the convergence of the common SONP separately.
It is then proved that given the initialization satisfying a particular and large range, the common SONP is convergent no matter the size of the damping factor.
For the convergence of the common FONP, we establish a sufficient condition. 
More specifically, we find that the convergence of the common FONP depends on the spectral radius of the iterating system matrix $\tilde{\bB}^\star$.
On this basis, we  establish condition for the damping factor that guarantees the convergence of $\btheta\left(t\right)$ in the worst case. 
{Further, we determine the range of the damping factor for massive MIMO-OFDM channel estimation by using the specific properties of the measurement matrices.}
Simulation results confirm the theoretical results.

\appendices

\section{Calculation of \eqref{equ:update of SIGA}}\label{proof:calculation of update}
We first show that $\bbnu\left(t+1\right)$ in \eqref{equ:tilde nu_s} can be re-expressed as that in \eqref{equ:update of nu_0}.
From \eqref{equ:tilde nu_s},  \eqref{equ:aux0 in AppA} on the next page  is direct.
\begin{figure*}
	\begin{equation}\label{equ:aux0 in AppA}
		\bbnu\left(t+1\right) = \underbrace{d\left(N-1\right)\left( \diag{\bD^{-1}-\left( \bLambda\left(\bbnu\left(t\right)\right) - \frac{1}{\beta\left(\bbnu\left(t\right)\right)}\bLambda^2\left(\bbnu\left(t\right)\right)\right)^{-1}} - \bbnu\left(t\right) \right)}_{d\bg\left(\bbnu\left(t\right)\right)}  + \left(1-d\right)\bbnu\left(t\right)
	\end{equation}
	\hrule
\end{figure*}
Thus,  $\bg\left(\bbnu\left(t\right)\right)$ can be expressed as \eqref{equ:aux1 in AppA} on the next page, where $\left(\textrm{a}\right)$ comes from that \eqref{equ:Lambda}.
\begin{figure*}
	\begin{equation}\label{equ:aux1 in AppA}
		\begin{split}
  &\bg\left(\bbnu\left(t\right)\right) = \left(N-1\right) \diag{ \bD^{-1} - \left( \bLambda\left(\bbnu\left(t\right)\right) - \frac{1}{\beta\left(\bbnu\left(t\right)\right)}\bLambda^2\left(\bbnu\left(t\right)\right) \right)^{-1}} - \left(N-1\right)\bbnu\left(t\right)\\
  =&\left(N-1\right)\diag{ \bD^{-1} - \Diag{\bbnu\left(t\right)} - \left( \bLambda\left(\bbnu\left(t\right)\right) - \frac{1}{\beta\left(\bbnu\left(t\right)\right)}\bLambda^2\left(\bbnu\left(t\right)\right) \right)^{-1}}\\
  \equaa&\left(N-1\right)	\diag{ \bLambda^{-1}\left(\bbnu\left(t\right)\right) - \bLambda^{-1}\left(\bbnu\left(t\right)\right)\left( \bI - \frac{1}{\beta\left(\bbnu\left(t\right)\right)}\bLambda\left(\bbnu\left(t\right)\right) \right)^{-1} } = -\left(N-1\right)\diag{\left(\beta\left(\bbnu\left(t\right)\right)\bI - \bLambda\left(\bbnu\left(t\right)\right)\right)^{-1}}
	\end{split}
	\end{equation}
	\hrule
\end{figure*}
We then show that when $t = 0$, the matrices that need to be inverted in \eqref{equ:aux1 in AppA} are intertible.
From \eqref{equ:Lambda} and $\bbnu\left(0\right) \le \bzero$, we can obtain that $\bLambda\left(\bbnu\left(0\right)\right)$ is positive definite and hence invertible.
From \eqref{equ:beta}, we have
\begin{equation*}
	\beta\left(\bbnu\left(0\right)\right) > \left[ \bLambda\left(\bbnu\left(0\right)\right) \right]_{i,i} > 0, i\in \setposi{M}.
\end{equation*}
This implies that
\begin{equation*}
	\beta\left(\bbnu\left(0\right)\right)\bI - \bLambda\left(\bbnu\left(0\right) \right)
\end{equation*}
is positive definite and hence invertible.
Moreover, combining \eqref{equ:aux0 in AppA} and \eqref{equ:aux1 in AppA}, we have $\bg\left(\bbnu\left(0\right)\right) < \bzero$, and 
\begin{equation*}
	\bbnu\left(1\right) = d\bg\left(\bbnu\left(0\right)\right) + \left(1-d\right)\bbnu\left(0\right) < \bzero
\end{equation*}
is finite.
%Similarly, we can obtain that $\bLambda\left(\bbnu\left(1\right)\right)$ and 
%\begin{equation*}
%	\beta\left(\bbnu\left(1\right)\right)\bI - \bLambda\left(\bbnu\left(1\right) \right)
%\end{equation*}
%are both positive definite and invertible.
Following by that, assuming that at the $t$-th iteration, where $t \ge 1$, we have
$\bbnu\left(t\right) < \bzero$ is finite, $\bLambda\left(\bbnu\left(t\right)\right)$ and
\begin{equation*}
	\beta\left(\bbnu\left(t\right)\right)\bI - \bLambda\left(\bbnu\left(t\right) \right)
\end{equation*} 
are positive definite and invertible.
In the same way, it can be readily checked that $\bbnu\left(t+1\right) < \bzero$ is finite.
Hence, we have $\bLambda\left(\bbnu\left(t+1\right)\right)$ and
\begin{equation*}
	\beta\left(\bbnu\left(t+1\right)\right)\bI - \bLambda\left(\bbnu\left(t\right) \right)
\end{equation*}
are positive definite and invertible.
By induction, we have shown that when $t \ge 1$, we have $\bbnu\left(t\right) < \bzero$ is finite,
and for $t \ge 0$, $\bLambda\left(\bbnu\left(t\right)\right)$ and 
\begin{equation*}
	\beta\left(\bbnu\left(t\right)\right)\bI - \bLambda\left(\bbnu\left(t\right) \right)
\end{equation*}
are positive definite and invertible.

We now show that $\btheta\left(t+1\right)$ in \eqref{equ:tilde theta_s} can be re-expressed as that in \eqref{equ:update of theta_0}.
From \eqref{equ:tilde theta_s}, we can obtain  \eqref{equ:aux2 in AppA}, where we omit some of the counter $t$ at the right hand side of the equation for the notational convenience, and
\begin{equation}\label{equ:T}
	\bT\left(\bbnu\right) = \left( \bI - \frac{1}{\beta\left(\bbnu\right)} \bLambda\left(\bbnu\right) \right)^{-1},
\end{equation} 
where the matrix invertibility comes from \eqref{equ:Lambda and beta} directly. 
\begin{figure*}
	\begin{equation}\label{equ:aux2 in AppA}
		\begin{split}
		&\btheta\left(t+1\right) = 
			\frac{d\left(N-1\right)}{N}\bT\left(\bbnu\right)\left[ \frac{1}{\beta\left(\bbnu\right)}\bA^H\left(2\by\right) - \frac{1}{\beta\left(\bbnu\right)}\bA^H \bA\bLambda\left(\bbnu\right){\btheta}\left(t\right)  + N\btheta\left(t\right) \right] + \left(1-dN\right)\btheta\left(t\right)\\
			=&\underbrace{\frac{2d\left(N-1\right)}{\beta\left(\bbnu\right) N}\bT\left(\bbnu\right)\bA^H\by}_{\bb\left(\bbnu\right)} + \underbrace{\left[ \frac{d\left(N-1\right)}{N}\bT\left(\bbnu\right)\left(N\bI - \frac{1}{\beta\left(\bbnu\right)}\bA^H\bA\bLambda\left(\bbnu\right)  \right)\btheta\left(t\right) - d\left(N-1\right)\btheta\left(t\right) \right]}_{d\bB\left(\bbnu\right) \btheta\left(t\right) }  + \left(1-d\right)\btheta\left(t\right)
		\end{split}
	\end{equation}
	\hrule
\end{figure*}
Thus, we can obtain 
\begin{align}
	&\bB\left(\bbnu\right)  \\
	=&  \frac{\left(N-1\right)}{N}\bT\left(\bbnu\right)\left(N\bI - \frac{1}{\beta\left(\bbnu\right)}\bA^H\bA\bLambda\left(\bbnu\right)  \right) - \left(N-1\right)\bI \nonumber \\
	=& \left(N-1\right) \left[\frac{\bT\left(\bbnu\right)}{N}\left( N\bI - \frac{\bA^H\bA\bLambda\left(\bbnu\right)}{\beta\left(\bbnu\right)} \right)  - \bT\left(\bbnu\right)\bT^{-1}\left(\bbnu\right) \right]                     \nonumber \\
	=&  \frac{\left(N-1\right)}{\beta\left(\bbnu\right)} \bT\left(\bbnu\right)\left( \bI - \frac{1}{N}\bA^H\bA \right)\bLambda\left(\bbnu\right).               \nonumber
\end{align}
Also, it is not difficult to show that given a finite $\btheta\left(0\right)$, $\btheta\left(t\right)$ is finite at each iteration.
This completes the proof.

\section{Proof of Lemma \ref{lemma:function tilde{g}}}\label{proof:function tilde{g}}
From Appendix \ref{proof:calculation of update}, it can be checked that given  $\bbnu \le  \bzero$, $\tilde{\bg}\left(\bbnu\right)$ and $\bg\left(\bbnu\right)$ are well defined.
%We first show that function $\bg\left(\bbnu\right)$ defined in \eqref{equ:function g text} satisfies  three similar properties mentioned above, and then the process of proving $\tilde{\bg}\left(\bbnu\right)$ will become clear.
%We begin with that $\bg\left(\bbnu\right)$ satisfies monotonicity. 
Denote $g_i\left(\bbnu\right)$, $\nu_{i}$, $d_i$ and ${\lambda}_i\left(\bbnu\right)$ as the $i$-th components of $\bg\left(\bbnu\right)$, $\bbnu$, the diagonals of  $\bD$ and $\bLambda\left(\bbnu\right)$, respectively, where $i\in \setposi{M}$.
Due to $\bbnu \le \bzero$, we have
\begin{subequations}
		\begin{equation}
		\beta\left(\bbnu\right) = \sigma_z^2 + \sum_{i=1}^{M}\lambda_{i}\left(\bbnu\right) {>}0,
	\end{equation}
	\begin{equation}\label{equ:lambda i}
		{\lambda}_{i}\left(\bbnu\right) = \frac{1}{d_i^{-1} - \nu_{i}}{>}0,
	\end{equation}
	\begin{equation}\label{g_i}
		\begin{split}
		   	g_{i}(\bbnu) &=- \frac{N-1}{\beta(\bbnu) - {\lambda}_i(\bbnu)} \\
		   	&= -\frac{N-1}{\sigma_z^2 + \sum_{i'\neq i}{\lambda}_{i'}\left(\bbnu\right)} < 0.	
		\end{split}
	\end{equation}
\end{subequations}
From \eqref{g_i} and \eqref{equ:update of nu_0}, the two properties of $\bg\left(\bbnu\right)$ and $\tilde{\bg\left(\bbnu\right)}$, i.e., the monotonicity and the scalability, are not difficult to see. 
We next show its boundedness. 

From the definitions, we have
\begin{subequations}
%	\begin{equation}
%		\lim_{\nu_{1},\nu_{ 2},\ldots,\nu_{ M}\rightarrow-\infty}{\lambda}_i\left(\bbnu\right) = 0, i\in \setposi{M},
%	\end{equation}
	\begin{equation}
		\lim_{\nu_{1},\nu_{ 2},\ldots,\nu_{ M}\rightarrow-\infty}\beta\left(\bbnu\right) = \sigma_z^2.
	\end{equation}
\end{subequations} 
From the monotonicity of $\bg\left(\bbnu\right)$, we can obtain
\begin{equation}
	\bg\left(\bbnu\right)>	\lim_{\nu_{1},\nu_{ 2},\ldots,\nu_{ M}\rightarrow-\infty}\bg\left(\bbnu\right) =  -\frac{N-1}{\sigma_z^2}\mathbf{1} = \tilde{\bg}_{\textrm{min}} .
\end{equation}
Thus,  $\tilde{\bg}_{\textrm{min}}<\bg\left(\bbnu\right)<\bzero$.
Then,  $\tilde{\bg}_{\textrm{min}}<\tilde{\bg}\left(\bbnu\right) < \bzero$ directly follows from the definition of $\tilde{\bg}\left(\bbnu\left(t\right)\right)$ in \eqref{equ:update of nu_0}. 
This completes the proof.

%From the definition, we have
%\begin{equation*}
%	\tilde{\bg}\left(\bbnu\right) = d\bg\left(\bbnu\right) + \left(1-d\right)\bbnu,
%\end{equation*}
%where $0<d\le 1$.
%Since given $\bbnu < \bbnu' \le \bo$, we have $\bg\left(\bbnu\right) < \bg\left(\bbnu'\right)$.
%The monotonicity of $\tilde{\bg}\left(\bbnu\right)$ can be readily checked.
%Then, since we have 
% \begin{equation}
% 	\tilde{\bg}\left(\alpha \bbnu\right)-\alpha \tilde{\bg}\left(\bbnu\right)=d\left(\bg\left(\alpha \bbnu\right)-\alpha \bg\left(\bbnu\right)\right),
% \end{equation}
%the scalability of $\tilde{\bg}\left(\bbnu\right)$ can be immediately shown through the same way as that of $\bg\left(\bbnu\right)$. 
%We finally prove that $\tilde{\bg}\left(\bbnu\right)$ is also bounded. 
%Given $\tilde{\bg}_{\textrm{min}}\le  \bbnu\le \bo$, combining $\tilde{\bg}_{\textrm{min}}<\bg\left(\bbnu\right)<\bo$,
%we can readily obtain that $\tilde{\bg}_{\textrm{min}}<\tilde{\bg}\left(\bbnu\right) < \bo$.
%This completes the proof.

\section{Proof of Theorem \ref{The-theta_1 conv}}\label{proof:theta_1 conv}

Consider $\bbnu\left(1\right) = \tilde{\bg}\left(\bbnu\left(0\right)\right)$.
If $\bbnu\left(1\right) \le \bbnu\left(0\right)$, by Lemma \ref{lemma:function tilde{g}},
\begin{equation}
	\bbnu\left(2\right) = \tilde{\bg}\left(\bbnu\left(1\right)\right)\le \tilde{\bg}\left(\bbnu\left(0\right)\right) = \bbnu\left(1\right).
\end{equation}
Then, the sequence $\bbnu\left(t\right)$ is a decreasing sequence.
By Lemma \ref{lemma:function tilde{g}}, this sequence is also bounded.
Thus, it converges to a finite vector $\bbnu^\star$.
Also, by Lemma \ref{lemma:function tilde{g}}, we have the result of Theorem \ref{The-theta_1 conv}.
The case of $\bbnu\left(1\right) \ge \bbnu\left(0\right)$ can be similarly proved.
This completes the proof.

\section{Proof of Lemma \ref{The-synchronous updates}}\label{Proof-The-syn}
%\subsection{Sufficient Condition}
Define $\btheta^\star$ as
\begin{equation}
	\btheta^\star \triangleq \left(\bI - \tilde{\bB}^\star\right)^{-1}\bb^\star.
\end{equation}
Since $\rho\left(\tilde{\bB}^\star\right)<1$, $1$ is not an eigenvalue of $\tilde{\bB}^\star$ and $\bI-\tilde{\bB}^\star$ is invertible. Thus,  the above $\btheta^\star$ exists
% and is unique.
%Also, since we have shown that $\bbnu^\star$ is finite, then, $\tilde{\bB}^\star$ and $\bb^\star$ are also finite.
%This implies that $\btheta^\star$ is finite.

We next show that $\btheta\left(t\right)$ converges to $\btheta^\star$. 
Since $\rho\left(\tilde{\bB}^\star\right) <1$, there exists a matrix norm $\norm{\cdot}$ such that \cite[Lemma 5.6.10]{horn2012matrix}
\begin{equation}
	\norm{\tilde{\bB}^\star} < 1.
\end{equation}
Then, let $\norm{\cdot}$ be the vector norm that induces the matrix norm $\norm{\cdot}$ \cite[Definition 5.6.1]{horn2012matrix}.
Define the error between $\btheta\left(t\right)$ and $\btheta^{\star}$ as
\begin{equation}
	\varepsilon\left(t\right) \triangleq \lVert \btheta\left(t\right) - \btheta^{\star} \rVert.
\end{equation}
Then, we can obtain \eqref{ineq0}  on the next page.
\begin{figure*}
	\begin{equation}\label{ineq0}
		\begin{split}
			&\varepsilon\left(t+1\right) = \lVert \btheta\left(t+1\right) - \btheta^{\star}\rVert
			= \lVert \tilde{\bB}\left( \bbnu\left(t\right) \right)\btheta\left(t\right) - \tilde{\bB}^{\star}\btheta^{\star} + \bb\left(\bbnu\left(t\right)\right)-\bb^{\star} \rVert\\
			&=\lVert \tilde{\bB}\left(\bbnu\left(t\right)\right)\left( \btheta\left(t\right)-\btheta^{\star}\right) + \left( \tilde{\bB}\left(\bbnu\left(t\right)\right)-\tilde{\bB}^{\star} \right)\btheta^{\star}  + \bb\left(\bbnu\left(t\right)\right)-\bb^{\star}  \rVert \\
			&{\le}\lVert \tilde{\bB}\left(\bbnu\left(t\right)\right)\left( \btheta\left(t\right)-\btheta^{\star}\right) \rVert  + \lVert \left( \tilde{\bB}\left(\bbnu\left(t\right)\right)-\tilde{\bB}^{\star} \right)\btheta^{\star} + \bb\left(\bbnu\left(t\right)\right)-\bb^{\star} \rVert\\
			&{\le }\norm{\tilde{\bB}\left(\bbnu\left(t\right)\right)}\varepsilon\left(t\right) + \lVert \left( \tilde{\bB}\left(\bbnu\left(t\right)\right)-\tilde{\bB}^{\star} \right)\btheta^{\star} + \bb\left(\bbnu\left(t\right)\right)-\bb^{\star} \rVert
		\end{split}
	\end{equation}
	\hrule
\end{figure*}
%$\left(\mathrm{a}\right)$ comes from the triangle inequality \cite[Definition 5.1.1 (3)]{horn2012matrix}, $\left(\mathrm{b}\right)$ comes from
%\cite[Theorem 5.6.2 (b)]{horn2012matrix}, 
%which is induced by the $\ell_2$ norm, 
%and  $\lambda_{\textrm{max}}\left(\cdot\right)$ is the largest eigenvalue of a matrix.
%Since $\rho(\tilde{\bB}^\star) <1$, we have 
%\begin{equation*}
%	\normmm{\tilde{\bB}^\star} = \lambda_{\textrm{max}}\left(\tilde{\bB}^\star\right) <1.
%\end{equation*}
Define a sequence $c\left(t\right)$ as
\begin{equation}\label{definition-ct}
	c\left(t\right)\triangleq\lVert \left( \tilde{\bB}\left(\bbnu\left(t\right)\right)-\tilde{\bB}^{\star} \right)\btheta_0^{\star} + \bb\left(\bbnu\left(t\right)\right)-\bb^{\star} \rVert.
\end{equation}
Since $\bbnu$ converges to $\bbnu^\star$, we have
\begin{equation*}
	\liminfty{t}\tilde{\bB}\left(\bbnu\left(t\right)\right) = \tilde{\bB}^{\star}, \ \liminfty{t}\bb\left(\bbnu\left(t\right)\right) = \bb^{\star},
\end{equation*}
and thus,
\begin{equation}\label{equ:aux1 in App2}
		\liminfty{t}c\left(t\right)= 0, 
\end{equation}
\begin{equation}\label{equ:aux2 in App2}
	\liminfty{t}\norm{\tilde{\bB}\left(\bbnu\left(t\right)\right)} = \norm{\tilde{\bB}^\star} <1.
\end{equation}
Let 
\begin{equation}
	\delta_1 \triangleq \frac{1 - \norm{\tilde{\bB}^\star}}{2} >0,
\end{equation}
\begin{equation}
	\delta_2 \triangleq \norm{\tilde{\bB}^\star} + \delta_1 <1.
\end{equation}
%Since $\rho\left(\tilde{\bB}^\star\right) <1$, we have $-1<\lambda_{\textrm{max}}\left(\tilde{\bB}^\star\right)<1$, and hence $\delta_2 > 0$.
To show 
\begin{equation*}
	\liminfty{t}\varepsilon\left(t\right) = 0,
\end{equation*}
we only need to show that $\forall \epsilon > 0$, $\exists \  t_0$, when $t > t_0$, we have
\begin{equation*}
	\varepsilon\left(t\right) < \epsilon.
\end{equation*} 
From \eqref{equ:aux1 in App2} and \eqref{equ:aux2 in App2}, we can obtain that $\exists \ t_1 > 0$, when $t > t_1$, we have
\begin{equation*}
	c\left(t\right) < \frac{\epsilon\left(1-\delta_2\right)}{2},
\end{equation*}
\begin{equation*}
	\norm{\tilde{\bB}\left(\bbnu\left(t\right)\right)} \le \delta_2.
\end{equation*}
Then, for $t \ge t_1 $, we have
\begin{equation*}
	\varepsilon\left(t+1\right) \le \delta_2\varepsilon\left(t\right) + \frac{\epsilon\left(1-\delta_2\right)}{2},
\end{equation*} 
and hence for any positive integer $\Delta t$,
\begin{equation*}
	\begin{split}
		&\varepsilon\left(t+\Delta t\right)\\ < & \delta_2^{\Delta t}\varepsilon\left(t\right) + \left(\delta_2^{\Delta t-1} + \delta_2^{\Delta t-2} + \cdots + \delta_2^{0}\right)  \frac{\epsilon\left(1-\delta_2\right)}{2} \\
		<& \delta_2^{\Delta t}\varepsilon\left(t\right) + \frac{\epsilon}{2}.
	\end{split}
\end{equation*}
Since $0<\delta_2<1$, we have 
\begin{equation*}
	\liminfty{\Delta t}\delta_2^{\Delta t} = 0.
\end{equation*}
Let $\Delta t$ such that 
\begin{equation*}
	\delta_2^{\Delta t} < \frac{\epsilon}{2\varepsilon\left(t_1\right)}.
\end{equation*}
Let $t_0 = t_1 + \Delta t$.
Then, when $t> t_0$, we have
\begin{align*}
	\varepsilon\left(t\right) &= \varepsilon\left(t_0 + t - t_0\right) = \varepsilon\left(t_1 + \Delta t + t - t_0\right) \\
	&< \delta_2^{ \Delta t + t - t_0}\varepsilon\left(t_1\right) + \frac{\epsilon}{2} < \delta_2^{\Delta t}\varepsilon\left(t_1\right) + \frac{\epsilon}{2} \\ \nonumber
	&< \frac{\epsilon}{2\varepsilon\left(t_1\right)}\varepsilon\left(t_1\right) + \frac{\epsilon}{2} = \epsilon.
\end{align*}
This proves 
\begin{equation*}
	\lim\limits_{t\to \infty} \varepsilon\left(t\right) = 0.
\end{equation*}
Since all vector norms are equivalent, it implies that $\norm{\btheta\left(t\right) - \btheta^\star}_2$ with the Euclidean norm also goes to zero as $t \to \infty$. 
This completes the proof of Lemma \ref{The-synchronous updates}.

\section{Proof of Lemma \ref{Lemma-spectral radius}}\label{{Proof-Lemma-spec}}
From \eqref{equ:update of nu_0}, we have
\begin{align}
	&\bbnu^{\star} = \bg\left(\bbnu^\star\right) \nonumber \\
	=&-\left(N-1\right)\diag{\left( \beta^{\star}\bI - \bLambda^{\star}    \right)^{-1}}.
\end{align}
From the definition of $\beta^{\star}$ in \eqref{equ:beta0 star} and $\bLambda^{\star}$ in \eqref{equ:Lambda0 star}, we can readily show that $\beta^{\star}\bI - \bLambda^{\star} $ is invertible.
Since we have proven that $\bbnu^{\star} <\bzero$ in Theorem \ref{The-theta_1 conv},
from the definition of $\bLambda^\star$ in \eqref{equ:Lambda0 star}, we can obtain
\begin{equation}\label{Lambda *}
	\begin{split}
		\bLambda^\star &= \left( \bD^{-1} -\Diag{\bbnu^{\star}} \right)^{-1} \\
		&{\prec} \left( -\Diag{\bbnu^{\star}} \right)^{-1}
		{=}\frac{1}{N-1}\left( \beta^*\bI - \bLambda^* \right).
	\end{split}
\end{equation}
%where $\left(\mathrm{a}\right)$ is from $\bD^{-1}$ and $\bbnu^\star < \bo$,
%and for two real diagonal matrices $\bD_1$ and $\bD_2$, $\bD_1 \prec \bD_2$ is equivalent to $\diag{\bD_1} < \diag{\bD_2}$.
From the definition, we can obtain $\bLambda^\star$ is diagonal positive definite.
Let $\lambda_i^\star = \left[ \bLambda^\star \right]_{i,i}, i\in \setposi{M}$.
Hence, $\lambda_i^\star$ is an eigenvalue of $\bLambda^{\star}$ and  $\lambda_i^\star > 0, i\in \setposi{M}$. 
Then, from \eqref{Lambda *}, 
we have 
\begin{equation}
	\lambda_i^\star - \frac{\beta^\star - \lambda_i^\star}{N-1} < 0, i\in \mathcal{Z}_M^+,
\end{equation}
which implies that $\lambda_i^\star < \frac{\beta^*}{N}, i\in \mathcal{Z}_M^+$. Hence, we have $\rho\left( \bLambda^\star \right) < \frac{\beta^\star}{N}$.  This completes the proof.

\section{Proof of Lemma \ref{lemma:eigens of B star}}\label{proof:eigens of B star}
%From the Proposition \ref{prop:1}, we have $\bI - \frac{1}{N}\bD^{-1}\bLambda^{\star}$ and $\frac{1}{\beta^\star}\bLambda^{\star}$ are both positive definite and diagonal.
%Then, we can obtain the following similarity relationship:
We first prove that the eigenvalues of $\bB^\star$ are all real.
Let
\begin{equation}
	\begin{split}
		\bQ &\triangleq \left( \bI - \frac{1}{N}\bD^{-1}\bLambda^\star \right)^{1/2}\left( \frac{\bLambda^\star}{\beta^\star} \right)^{1/2}\left( N\bI - \bA^H\bA \right)\\
		& \ \ \times\left( \bI - \frac{1}{N}\bD^{-1}\bLambda^\star \right)^{1/2}\left( \frac{\bLambda^\star}{\beta^\star} \right)^{1/2}\\
		&= \bK^{-1}\bB^\star\bK \sim \bB^\star,
	\end{split}
\end{equation}
where $\bK$ is the following diagonal positive definite matrix: 
\begin{equation}
	\bK = \left( \frac{\bLambda^\star}{\beta^\star}\right)^{-1/2}\left( \bI - \frac{1}{N}\bD^{-1}\bLambda^\star \right)^{1/2}. 
\end{equation}
Thus, $\bB^\star$ and $\bQ$ have the same eigenvalues.
From the definition, $\bQ$ is Hermitian. Therefore, the eigenvalues of $\bQ$ and $\bB^\star$ are all real. 
Then, from  \eqref{equ:Q in Appendix} on the next page,
\begin{figure*}
	\begin{equation} \label{equ:Q in Appendix}
			\bQ = \underbrace{N\left(\bI - \frac{1}{N}\bD^{-1}\bLambda^\star\right)\frac{\bLambda^\star}{\beta^\star}}_{\bQ_1} +  \underbrace{\left( \bI - \frac{1}{N}\bD^{-1}\bLambda^\star \right)^{1/2}\left( \frac{\bLambda^\star}{\beta^\star} \right)^{1/2}\left( - \bA^H\bA \right) \left( \bI - \frac{1}{N}\bD^{-1}\bLambda^\star \right)^{1/2}\left( \frac{\bLambda^\star}{\beta^\star} \right)^{1/2}}_{\bQ_2}
	\end{equation}
	\hrule
\end{figure*}
we have $\bQ_1, \bQ_2$ are Hermitian, and hence \cite[6.70 (a), pp116]{seber2008matrix}
\begin{equation}
	\lambda_{max}\left(\bQ\right) \le \lambda_{max}\left(\bQ_1\right) + \lambda_{max}\left(\bQ_2\right).
\end{equation}
Then, for $\bQ_1$, we can readily check that it is positive definite, and thus
\begin{equation}
	\lambda_{max}\left(\bQ_1\right) = \rho\left(\bQ_1\right) \overset{\left(\mathrm{a}\right)}{\le}N\rho\left(\bI - \frac{1}{N}\bD^{-1}\bLambda^\star\right)\rho\left(\frac{\bLambda^\star}{\beta^\star}\right)
	\overset{\left(\mathrm{b}\right)}{<}1,
\end{equation}
where $\left(\mathrm{a}\right)$ comes from \cite[Exercise below Theorem 5.6.9]{horn2012matrix} and $\left(\mathrm{b}\right)$ comes from \eqref{equ:aux1 in text} and \eqref{equ:aux2 in text}. 
%Then, from Sylvester’s law of inertia \cite[16.48]{seber2008matrix}, the inertia of $\bQ_2$ is the same as $-\bA^H\bA$. Since $-\bA^H\bA$ is negative semidefinite, 
Define $\bK_1$ as
\begin{equation}
	\bK_1 = \left( \bI - \frac{1}{N}\bD^{-1}\bLambda^\star \right)^{1/2}\left( \frac{\bLambda^\star}{\beta^\star} \right)^{1/2}.
\end{equation}
Then, we can obtain that $-\bQ_2 = \left(\bA\bK_1\right)^H\bA\bK_1$.
Hence, $\bQ_2$ is negative semidefinite,  and we have $\lambda_{max}\left(\bQ_2\right) \le  0$. Thus, we have
\begin{equation}
	\lambda_{max}\left(\bQ\right) < 1.
\end{equation}
Since $\bB^\star \sim \bQ$, we have $\lambda_{max}\left(\bB^\star\right) < 1$.
Then, from \eqref{equ:range of eigen of B}, we have 
\begin{equation}
	\rho\left(\bB^{\star}\right)< 1\times \rho\left(N\bI - \bA^H\bA\right)\times \frac{1}{N} = \frac{\rho\left(N\bI - \bA^H\bA\right)}{N}.
\end{equation}
Thus, we can obtain that 
\begin{equation*}
	-\frac{\rho\left(N\bI - \bA^H\bA\right)}{N} < \lambda_{B,i} < 1.
\end{equation*}
%Since we also have $\lambda_{B,i} < 1$, we have $1-M < \lambda_{B,i} <1$.
This completes the proof.

\section{Proof of Theorem \ref{lemma:radius of NI-A^HA}}\label{proof:radius of NI-A^HA}
We first give the range of $\rho\left(\bA^H\bA\right)$.
From the definition, $\bA^H\bA$ is positive semidefinite. The eigenvalues $v_1\le v_2\le  \cdots\le  v_M$ of $\bA^H\bA$ are real and nonnegative. 
Thus, we can obtain $v_M = \rho\left(\bA^H\bA\right)$.
Then, we have
\begin{align}
	\rho\left(\bA^H\bA\right) &= v_M  \le \sum_{m=1}^{M}v_m 
	= \tr{\bA\bA^H} \nonumber\\
	&\le \sum_{m=1}^{M}v_M = M\rho\left( \bA^H\bA \right).
\end{align} 
When $\left|a_{i,j}\right|=1, \forall i,j$, we can obtain
\begin{equation}
	\tr{\bA\bA^H} = \lVert \bA\rVert_F^2 = NM.
\end{equation}
Thus, we have $\rho\left(\bA^H\bA\right) \le NM \le M\rho\left(\bA^H\bA\right)$, which implies that
\begin{equation}
	N\le \rho\left( \bA^H\bA \right)\le NM.
\end{equation}
Hence, we have $0\le v_1\le \cdots \le v_M \le NM$.
The eigenvalues of $N\bI - \bA^H\bA$ are $v_m' =N - v_m, m\in \setposi{M}$.
Thus, we have $N-NM \le v_m'\le N$, and $\left|  v_m'\right| \le \max\braces{N,NM-N}$.
Since in this paper $M >1$, we have $\rho\left(N\bI - \bA^H\bA\right) \le NM-N$.
If $\rank{\bA} = 1$, then $\bA$ can be decomposed as $\bA = \ba\bb^H$, where $\ba \in \bbC^{N\times 1}$, $\bb \in \bbC^{M\times 1}$, and $\ba$ and $\bb$ are non-zero.
Combining \cite[6.54 (c)]{seber2008matrix}, $\bb \bb^H$ and $\bb^H \bb$ are positive semi-definite, we can obtain that 
\begin{align}
	\rho\left(\bA^H\bA\right) &= \rho\left( \bb \ba^H\ba \bb^H \right) = \ba^H\ba\rho\left( \bb \bb^H \right) =\ba^H\ba\rho\left( \bb^H \bb \right)\nonumber  \\ 
	&= \tr{\bA^H\bA} = NM.
\end{align}
Then, we have $\rho\left(N\bI - \bA^H\bA\right) = NM -N$.
This completes the proof.

\section{Proof of Theorem \ref{lemma:SR in mMIMO-CE 1}}\label{proof:SR in mMIMO-CE 1}

From the definition, we have 
\begin{equation*}
	\bA^H\bA = \bE^T\left( \tilde{\bA}^H\tilde{\bA} \right)\bE,
\end{equation*}
which implies that $\bA^H\bA$ is a principal submatrix of $\tilde{\bA}^H\tilde{\bA}$.
Combining \cite[Theorem 4.3.28]{horn2012matrix} and the elementary transformation, we have
\begin{equation}
	\lambda_{max}\left(\bA^H\bA\right) \le\lambda_{max}\left(\tilde{\bA}^H\tilde{\bA}  \right),  
\end{equation}
which implies that $\rho\left(\bA^H\bA\right) \le \rho\left(\tilde{\bA}^H\tilde{\bA}\right)$.
From \cite[6.54 (c), pp 107]{seber2008matrix}, we can obtain $\rho\left(\tilde{\bA}^H\tilde{\bA}\right) = \rho\left(\tilde{\bA}\tilde{\bA}^H\right)$.
From the definition, we have
\begin{equation}
	\begin{split}
		\tilde{\bA}\tilde{\bA}^H &= \left(\bM^T \otimes \bV\right)\left( \bM^T \otimes \bV \right)^H \\
		&= \left( \bM^T\bM^* \right) \otimes \left( \bV_v\otimes \bV_h \right)\left( \bV_v^H\otimes \bV_h^H \right)\\
		&= F_vF_hN_r\bK \otimes \bI,
	\end{split}     
\end{equation}
where $\bK = \sum_{k=1}^{K}\bX_k\bF\bF^H\bX_k^H$. 
Since $\bK$ is Hermitian, we can decompose $\bK$ as $\bK = \bU\bLambda_K\bU^H$, where $\bU$ is unitary. 
Then, we can obtain
\begin{equation}
	\bK \otimes \bI = \left( \bU\otimes \bI \right)\left(\bLambda_K\otimes \bI\right)\left(\bU^H\otimes \bI\right) = \bU'\bLambda_K'\left(\bU'\right)^H,
\end{equation} 
where $\bU'$ is unitary and $\bLambda_K'$ is diagonal. 
Since $\bK \otimes \bI$ is also Hermitian, we can obtain that 
\begin{equation*}
	\rho\left(\bK\otimes \bI\right) = \rho\left(\bK\right).
\end{equation*}
Since $\bX_k$ is unitary, we also have $\rho\left(\bX_k\bF\bF^H\bX_k^H\right) = \rho\left( \bF\bF^H \right), \forall k$.
Finally, we have
\begin{equation}
	\begin{split}
		&\rho\left( \tilde{\bA}^H\tilde{\bA} \right) = \rho\left(\tilde{\bA}\tilde{\bA}^H\right) 
		= F_vF_hN_r\rho\left(\bK\right)\\
		\overset{\left(\mathrm{a}\right)}{\le}& F_vF_hN_r\sum_{k=1}^{K}\rho\left(\bF\bF^H\right) = KF_vF_hN_r\rho\left(\bF\bF^H\right),
	\end{split}
\end{equation}
where $\left(\textrm{a}\right)$ comes from \cite[6.70 (a), pp 116]{seber2008matrix} and $\bX_k\bF\bF^H\bX_k^H$ is positive semi-definite.
Similarly, we can obtain 
\begin{align}
	\rho\left( \bF\bF^H \right) &= \rho\left(\bF^H\bF\right) = \rho\left( 		 \tilde{\bI}_{F_\tau N_p\times F_\tau N_f}^T \bF_d^H\bF_d\tilde{\bI}_{F_\tau N_p\times F_\tau N_f} \right) \nonumber \\
	&\le \rho\left(  \bF_d^H\bF_d \right) = F_\tau N_P ,
\end{align}
%From the definition, we have
%\begin{equation}
%	\begin{split}
%		\bF\bF^H &= \bF_d\tilde{\bI}_{F_\tau N_p\times F_\tau N_f}\tilde{\bI}_{F_\tau N_p\times F_\tau N_f}^H\bF_d^H\\
%		&= \bF_d
%		\begin{bmatrix}
%			\bI_{F_\tau N_f} &\bO\\
%			\bO & \bO
%		\end{bmatrix}\bF_d^H \\
%		&= \bF_d\bF_d^H - \bF_d
%		\underbrace{		
%			\begin{bmatrix}
%				\bO & \bO\\
%				\bO & \bI_{F_\tau \left(N_p- N_f\right)}
%			\end{bmatrix}
%		}_{\bT}\bF_d^H\\
%		&= F_\tau N_p\bI - \bF_d\bT\bF_d^H
%		\overset{\left(\mathrm{a}\right)}{\preceq}F_\tau N_p \bI,
%	\end{split}
%\end{equation}
%where  $\left(\mathrm{a}\right)$ comes from $\bF_d\bT\bF_d^H$ is positive semi-definite. 
%Thus, we can obtain
%\begin{equation*}
%	\rho\left(\bF\bF^H\right) \le F_\tau N_p,
%\end{equation*}
and hence,
\begin{equation}
	\rho\left(\bA^H\bA\right)\le KF_vF_hF_\tau N_r N_p = KF_vF_hF_\tau N.
\end{equation}
From a similar process in Appendix \ref{proof:radius of NI-A^HA}, we can obtain 
\begin{equation}\label{equ:aux1 in appH}
	\rho\left(N\bI-\bA^H\bA\right) \le KF_vF_hF_\tau N - N.
\end{equation}
Substituting \eqref{equ:aux1 in appH} into the right hand side of \eqref{equ:range of d 0}, we have 
\begin{equation}
	\frac{2}{1 + \frac{\rho\left( N\bI - \bA^H\bA \right)}{N}} \ge \frac{2}{KF_vF_hF_\tau}.
\end{equation}
In this case, if $d < \frac{2}{KF_vF_hF_\tau}$, then SIGA converges.
This completes the proof.

\section{Proof of Theorem \ref{lemma:SR in mMIMO-CE 2}}\label{proof:SR in mMIMO-CE 2}
%Let $\bA$ be equal to $\bA_p$ in \cite[Equation (50)]{SIGA}, we have
%\begin{equation*}
%	\bA = \tilde{\bA}_p\bE_p,
%\end{equation*}
%where 
%\begin{equation*}
%	\tilde{\bA}_p = \bF_d\otimes \bV \in \bbC^{N \times F_vF_hF_\tau N},
%\end{equation*}
%and $\bE_p \in \bbC^{F_vF_hF_\tau N \times M}$ is another column extraction matrix \cite[above Equation (50)]{SIGA}.
From the definitions, it is not difficult to obtain that 
\begin{align}
	\rho\left(\bA^H\bA\right) \le \rho\left(\tilde{\bA}_p^H\tilde{\bA}_p\right) = \rho\left(  \tilde{\bA}_p\tilde{\bA}_p^H\right)= F_vF_hF_\tau N. 
\end{align}
Hence, we can obtain that 
\begin{equation*}
	\rho\left( N\bI - \bA^H\bA \right) \le \left( F_vF_hF_\tau -1 \right)N. 
\end{equation*}
Similarly, substituting the above range into the right hand side of \eqref{equ:range of d 0}, we have 
\begin{equation}
	\frac{2}{1 + \frac{\rho\left( N\bI - \bA^H\bA \right)}{N}} \ge \frac{2}{F_vF_hF_\tau}.
\end{equation}
In this case, if $d < \frac{2}{F_vF_hF_\tau}$, then SIGA converges.
This completes the proof.

\bibliographystyle{IEEEtran}  
\bibliography{IEEEabrv,reference}

\end{document}